\tikzset{diagram-node-light/.style={black, circle, draw=black, fill=yellow!20, minimum size=0.5cm, inner sep=0pt}}
\tikzset{diagram-node-dark/.style={black, circle, draw=black, fill=yellow!70, minimum size=0.5cm, inner sep=0pt}}
\tikzset{diagram-node-inverted/.style={white, circle, draw=black, fill=black!50, minimum size=0.5cm, inner sep=0pt}}
\newtheorem{conjecture}{Conjecture}
\newtheorem{corollary}{Corollary}
\newtheorem{assumption}{Assumption}
\newtheorem{theorem}{Theorem}
\newtheorem{definition}{Definition}
\newtheorem{lemma}{Lemma}
\begin{document}

\title{Amplitude Ratios and Neural Network Quantum States}
\author{Vojt\v{e}ch Havl\'{i}\v{c}ek}
\affiliation{IBM Quantum, IBM T.J. Watson Research Center}

\begin{abstract}
Neural Network Quantum States (NQS) represent quantum wavefunctions by artificial neural networks. Here we study the wavefunction access provided by NQS defined in [Science, \textbf{355}, 6325, pp. 602-606 (2017)] and relate it to results from distribution testing. This leads to improved distribution testing algorithms for such NQS. It also motivates an independent definition of a wavefunction access model: the amplitude ratio access. We compare it to sample and sample and query access models, previously considered in the study of dequantization of quantum algorithms. First, we show that the amplitude ratio access is strictly stronger than sample access. Second, we argue that the amplitude ratio access is strictly weaker than sample and query access, but also show that it retains many of its simulation capabilities. Interestingly, we only show such separation under computational assumptions. Lastly, we use the connection to distribution testing algorithms to produce an NQS with just three nodes that does not encode a valid wavefunction and cannot be sampled from.
\end{abstract}

\tableofcontents

\section{Introduction}

Machine learning algorithms have been recently used for simulating quantum many-body systems. A prominent example is a method introduced in Ref.~\cite{Carleo17}, called the \textit{Neural Network Quantum State} (NQS). NQS encodes a quantum wavefunction $\psi_\theta(v):\lbrace -1,1 \rbrace^n \rightarrow \mathbb{C}$. The encoded wavefunction is defined by complex-valued parameters $\theta = (a,b,W)$, where $a$ are visible biases of the network, $b$ are hidden biases, and $W$ are interaction weights.  The wavefunction can be accessed by sampling of the Born distribution $p_\theta(v) := | \psi_\theta(v)|^2$ and computing $\psi_{\theta}(v)$ up to an unknown and usually inaccessible normalization factor. This allows for efficient computation of ratio of the wavefunctions $\psi_\theta (v)/\psi_\theta (w)$ for two outcomes $v,w$.

\subsection{Results}

This work explores the relationship between this wavefunction access and results from conditional distribution testing. This  motivates a definition of a wavefunction access model, the amplitude ratio (AR) access. AR can be seen as a variant of the sample and query (SQ) access to wavefunction, which has seen use in analysis of quantum algorithms and dequantization (see for example \cite{Ewin18}). Here we argue that AR is a strictly weaker model than the standard SQ. Secondly, the connection to distribution testing allows us to show that there exist small NQS that cannot be sampled from, in an  analogy to results about sampling from Restricted Bolzmann Machines \cite{Long10, Martens13}. 

\paragraph{NQS and PCOND:} Given a NQS $\theta$ and two measurement outcomes $v_0, v_1 \in \lbrace -1, 1 \rbrace^n$, it is possible to efficiently compute and sample from $p_{\theta}(v |\, v = v_0 \text{ or } v = v_1)$. We observe that this instantiates the \emph{pair-cond} (PCOND) \emph{oracle} used in conditional distributon testing~\cite{Canonne12, Chakraborty13}. It is known that PCOND leads to significant improvements in \emph{query complexity}~\cite{Canonne12, Chakraborty13, Canonne19}. We show how this improvement enables computational advantages for algorithms for testing NQS Born distributions.

\paragraph{Amplitude ratio (AR) access:} We define a version of PCOND for amplitudes, the \emph{amplitude ratio \textnormal{(AR)} access}. We discuss that it can be understood as a variant  the sample and query (SQ) wavefunction access\footnote{AR can be seen as an SQ without access to normalization factors. In Ref.~\cite{Ewin18}, Tang uses SQ access to a vector without normalization but their key subroutine requires access to the normalization factor of the vector, resulting in SQ access to normalized data. See~Sec.\ref{Sec:ARvsSQ} for discussion.}  that found many uses in analysis of quantum algorithms and dequantization \cite{Ewin18}.
We compare AR to SQ with normalized queries and show that it retains many of its simulation capabilities, such as:
\begin{itemize}
    \item a query-efficient fidelity estimator.
    \item the ability to compute expectation values of sparse observables.
\end{itemize}

We give evidence that SQ is strictly stronger access model than AR and show that AR is a strictly stronger model than PCOND. 
We derive a robust version of the fidelity estimator and show how to estimate sparse observables with AR access.

\paragraph{NQS postselection gadgets:} We show how to postselect the Born distribution of an NQS by changing its network structure. We call such transformation a NQS postselection gadget and use it to give an NQS with only three nodes (and polynomially-bounded weights and biases) that does not encode a valid wavefunction. As the result implies that the NQS distribution cannot be sampled from, it can be understood as a counterpart to the best known hardness of sampling result for Restricted Boltzmann Machines, which shows that a certain distribution cannot be represented (and hence sampled from) with a polynomially-sized RBM \cite{Martens13}. We briefly discuss other possible application of the gadgets.  

\section{Background}
\subsection{Neural Network Quantum States (NQS)}
\label{Sec:NQS}
Ref.~\cite{Carleo17} proposed NQS representation of quantum wavefunctions by a hidden Markov models, largely inspired by Restricted Boltzmann Machines (RBM) \cite{Smolensky86, Hinton02}. We briefly review it here.

Let $v \in \lbrace -1, +1 \rbrace^n$ and define $\psi(v) = f_\theta(v)/Z_\theta$, where:
\begin{equation}
\begin{aligned}
    f_\theta(v) &= \sum_{h \in \lbrace -1, +1 \rbrace^m} \exp(a^\intercal v + b^\intercal h + v^\intercal W h) \\
    &= \exp(a^\intercal v) \prod_{i=1}^m 2\cosh \left( b_i + \sum_j^n v_j W_{ji}\right),
\end{aligned}
    \label{Eq:numerator}
\end{equation}
and $Z_\theta := \sqrt{\sum_v |f_\theta(v)|^2}$.
 The parameters: 
 \begin{align} \theta := (a, b, W),
 \label{Eq:params} \end{align}
 are all complex-valued; $a \in \mathbb{C}^n, b \in \mathbb{C}^m, W \in \mathbb{C}^{n \times m}$ and fully specify the model. We denote $\| \theta \|_\infty = \max \left(\|a\|_\infty, \|b\|_\infty, \|W\|_\infty \right)$ and assume that $\| \theta \|_\infty \leq \textnormal{poly}(n)$. 
  
  Note that $Z_\theta := \sqrt{\sum_v |f_\theta(v)|^2}$ sums over all configurations $v$ and that there is apriori no simple way to evaluate it. In contrast, the {numerator} $f_\theta(v)$ in Eq.~\ref{Eq:numerator} can be evaluated to machine precision\footnote{See Sec.~\ref{Sec:RelAppx}.} in polynomial time in $m$ and $n$. This follows by observing that each of the $m$ factors only depends on a $\sum_{j}^n v_i W_{ji}$ which has at most $n$ terms. See Fig.~\ref{fig:NQS}. 

\begin{figure}[h]
\centering
\begin{tikzpicture}[baseline = (current bounding box.center)]
\node[diagram-node-light] at (0, 0)   (v1) {$v_1$};
\node[diagram-node-light] at (1, 0)   (v2) {$v_2$};
\node[diagram-node-light] at (2, 0)   (v3) {$v_3$};

\node[diagram-node-light] at (3, 0)   (v4) {$v_4$};

\node[diagram-node-dark] at (0, -1)   (h1) {$h_1$};
\node[diagram-node-dark] at (1, -1)   (h2) {$h_2$};
\node[diagram-node-dark] at (2, -1)   (h3) {$h_3$};

\node[diagram-node-dark] at (3, -1)   (h4) {$h_4$};

\draw (h1) -- (v1);
\draw (h1) -- (v2);
\draw (h1) -- (v3);
\draw (h1) -- (v4);

\draw (h2) -- (v1);
\draw (h2) -- (v2);
\draw (h2) -- (v3);
\draw (h2) -- (v4);

\draw (h3) -- (v1);
\draw (h3) -- (v2);
\draw (h3) -- (v3);
\draw (h3) -- (v4);

\draw (h4) -- (v1);
\draw (h4) -- (v2);
\draw (h4) -- (v3);
\draw (h4) -- (v4);
\end{tikzpicture}

\caption{NQS is composed of hidden ($h$) and visible ($v$) nodes arranged in a bipartite graph. Every node can be in a state $+1$ or $-1$. Each edge $(i,j)$ of the graph carries a complex-valued \emph{interaction strength} $W_{ij}$ and every node carries a \emph{local field} $a_i$ (for visible) or $b_j$ (hidden). The model encodes the unnormalized wavefuntion into a ``marginal amplitude'' over hidden nodes (Eq.~\ref{Eq:numerator}). It can be sampled from using Gibbs sampling.
}
\label{fig:NQS}
\end{figure}
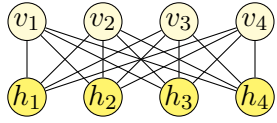
The set of parameters $\theta$ is usually found by variational optimization \cite{Carleo17}. This relies on sampling from the \emph{Born distribution} $|\psi_{\theta}(v)|^2$ and gradually updating the network parameters. Sampling from the Born distribution is usually done by ``thermalizing'' the model with a Markov Chain Monte Carlo; or \emph{Gibbs sampling} from the Born distribution. The same method is then used to sample from the Born distribution of a trained model. There is generally no guarantee that the Markov chains, both during the training and testing phase, converge to the target distribution rapidly for a given $\theta$. Remarkably, we in show in Sec.~\ref{Sec:Hardness} that there exist parameters for which no such Markov chain converges. 

\subsection{NQS and Restricted Boltzmann Machines (RBMs):}
NQS were inspired by Restricted Boltzmann Machines (RBMs)~\cite{Carleo17} and while the two models share many similarities, there are important differences. An RBM represents a \emph{distribution} $p_\theta(v)$ for $v \in \lbrace -1, +1 \rbrace^n$ by a marginal over the Gibbs-Boltzmann distribution of an Ising model on a bipartite graph. The model is, in the simplest setting, defined by a set of real-valued weights and biases (local fields). The output distribution is represented as the thermal distribution \emph{marginalized} over the hidden nodes:
\begin{align}
    p_\theta^{\textnormal{RBM}}(v) &\propto \sum_{h \in \lbrace -1, +1\rbrace^m} \exp(a^\intercal v + b^\intercal h + v^\intercal W h).
    \label{Eq:RBM_probability}
\end{align}
Even though Eq.~\ref{Eq:numerator} and Eq.~\ref{Eq:RBM_probability} look very similar, NQS is \emph{not} straightforwardly represented by a marginal distribution. The model instead implicitly defines a Born distribution:
\begin{align}
    p^{\text{NQS}}_{\theta} (v) &\propto \left| \sum_{h \in \lbrace -1, +1 \rbrace^m} \exp(a^\intercal v + b^\intercal h + v^\intercal W h) \right|^2.
    \label{Eq:Prob_NQS}
\end{align}
This allows for \emph{interference} between the summands. Because the network parameters can be complex-valued, such interference can be destructive. This means that Eq.~\ref{Eq:Prob_NQS} can evaluate to zero. In comparisson, the marginal sum for an RBM in Eq~\ref{Eq:RBM_probability} is lower-bounded by $2^m \exp(-\| a \|_\infty n)$, which is large for $m \gg n$ and small $\| a \|_\infty$. In such setting, the RBM algorithm cannot faithfully represent zeros in the output distribution. We show in Sec.~\ref{Sec:Hardness} that this implies that there exist parameters for which the NQS does not define a valid distribution.

\subsection{Approximations}
\label{Sec:RelAppx}
Here we introduce some notions of approximation that will be used throughout the work. We also state and justify the key assumption (Assumption~\ref{Ass:Key}), which ensures that the notion of amplitude ratios is well-defined.   
\paragraph{Additive approximations:}
Let $\epsilon \geq 0$. An $\epsilon$-additive approximation $\tilde{R}$ to a real number $R$ is a real number that satisfies:
\begin{align}
    |R- \tilde{R}| \leq \epsilon.
\end{align}

\paragraph{Relative approximations:} Let $\epsilon > 0$.
An $\epsilon$-relative approximation $\tilde{R}$ to a non-negative real number $R \in \mathbb{R}_0^+$ is a real number that satisfies:
  \begin{align}
      (1+\epsilon)^{-1}R \leq \tilde{R} \leq (1+\epsilon)R.
  \end{align}
  \begin{lemma}
  \label{Lemma:Ratios}
  Fix $\epsilon > 0$.
  Let $\tilde{Q} \in \mathbb{R}$ and $\tilde{R} \in \mathbb{R}$ be $\epsilon$-relative approximations to $Q,R \in \mathbb{C}$ respectively. Then $\tilde{Q}/\tilde{R}$ is a $3\epsilon$-relative approximation to $Q/R$, $\tilde{Q}\tilde{R}$ is a $3\epsilon$-relative approximation to $QR$. 
  \end{lemma}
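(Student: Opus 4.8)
The plan is to reduce both statements to a single elementary inequality in $1+\epsilon$. First I would note that ``$\tilde R$ is an $\epsilon$-relative approximation to $R$'' is precisely the two-sided multiplicative bound $(1+\epsilon)^{-1} \le \tilde R / R \le 1+\epsilon$, and likewise for $\tilde Q$. (I read the hypothesis $Q,R \in \mathbb{C}$ as a slight abuse for the case where the approximated quantities are nonnegative reals, which is the only case in which the notion of relative approximation was defined above, and the only case that arises in the applications, where $R$ is a modulus or a Born probability; in particular $R \neq 0$ forces $\tilde R \geq (1+\epsilon)^{-1} R > 0$, so the quotient $\tilde Q / \tilde R$ is well defined.)

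For the product, multiplying the two bounds gives $(1+\epsilon)^{-2} \le \tilde Q \tilde R/(QR) \le (1+\epsilon)^2$. For the quotient, since $x \mapsto \tilde Q/\tilde R$ is increasing in $\tilde Q$ and decreasing in $\tilde R$, the same two bounds yield $(1+\epsilon)^{-2} \le (\tilde Q/\tilde R)\big/(Q/R) \le (1+\epsilon)^2$, the extremes being attained when one of $\tilde Q,\tilde R$ is maximal and the other minimal. So in both cases the candidate value lies within a factor $(1+\epsilon)^{\pm 2}$ of the true value, and it remains only to show that this is a $3\epsilon$-relative approximation, i.e.\ that $(1+\epsilon)^2 \le 1+3\epsilon$ and $(1+\epsilon)^{-2} \ge (1+3\epsilon)^{-1}$ for all $\epsilon \in [0,1)$. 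Both inequalities are equivalent to $(1+\epsilon)^2 \le 1+3\epsilon$, i.e.\ to $\epsilon^2 \le \epsilon$, which holds exactly on $[0,1]$; this is where the hypothesis $\epsilon \in [0,1)$ is used. That finishes both parts.

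There is no real obstacle here; the only points requiring care are (i) the convention for what ``relative approximation to a complex number'' means, which I would flag and dispatch as above, and (ii) checking that the domain restriction $\epsilon \in [0,1)$ is genuinely needed (it is exactly what makes $\epsilon^2 \le \epsilon$ hold and keeps $(1+\epsilon)^{-1}$ positive). As a remark I would mention the equivalent ``logarithmic'' bookkeeping: $\big|\ln(\tilde R/R)\big| \le \ln(1+\epsilon)$, products and quotients add these log-errors so the total error is at most $2\ln(1+\epsilon)$, and $2\ln(1+\epsilon) \le \ln(1+3\epsilon)$ is once more the inequality $(1+\epsilon)^2 \le 1+3\epsilon$.
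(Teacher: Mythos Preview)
Your argument is correct and matches the paper's: both derive the two-sided bound $(1+\epsilon)^{-2} \le (\tilde Q/\tilde R)\big/(Q/R) \le (1+\epsilon)^2$ and then check that $(1+\epsilon)^2 = 1 + 2\epsilon + \epsilon^2 \le 1 + 3\epsilon$ on $[0,1)$. The only cosmetic difference is that the paper handles the product by observing that $1/\tilde R$ is an $\epsilon$-relative approximation to $1/R$ and invoking the quotient case, whereas you multiply the two bounds directly; your caveat about the complex-number convention is addressed in the paper immediately after the lemma rather than inside the proof.
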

  \begin{proof}
  First note that:
  \begin{align}
  \frac{1}{(1+\epsilon)^2}    \frac{{Q}}{{R}} \leq  \frac{\tilde{Q}}{\tilde{R}}  \leq (1+\epsilon)^2 \frac{{Q}}{{R}}.
  \end{align}
  Set $1 + \epsilon' = (1+\epsilon)^2$ as the degree of relative approximation to the ratio, from which $\epsilon' = \epsilon^2 + 2\epsilon \leq 3 \epsilon$. Given an $\epsilon$-relative approximation $\tilde{Q}$ to $Q$, $1/\tilde{Q}$ is an $\epsilon$-relative approximation to $1/Q$. This means that $\tilde{Q}\tilde{R}$ is a $3\epsilon$-relative approximation to $QR$. 
  \end{proof}

  \paragraph{Complex numbers:}
  We will require a similar notion of approximation for complex numbers. To simplify our analysis, we use the following simplification: we assume that the complex numbers are stored in polar form as $e^{i\alpha} R := (\alpha, R)$ for $\alpha \in [0,2 \pi)$ and $R \in \mathbb{R}_0^+$ and that the phase $\alpha$ is stored exactly \footnote{In practice, it will be only stored to a sufficiently high \emph{additive} precision, using up to $poly(n)$ bits. We will ignore the issues that arise by assumption of storing $\alpha$ exactly, such as when $\alpha$ is an irrational number.}. By an $\epsilon$-relative approximation to a complex number $C = e^{i\alpha}{R}$, we mean a number $\tilde{C}$ that satisfies:
  \begin{align}
      (1+\epsilon)^{-1} R \leq \tilde{R} \leq (1+\epsilon) R,
  \end{align}
  and $\alpha = \tilde{\alpha}$. Lemma~\ref{Lemma:Ratios} works for this notion of approximation if we assume that the result of adding two phases is always mapped back to $[0, 2\pi)$.

  \paragraph{Machine precision:}  Machine precision is an upper bound on the relative error $\epsilon$ due to rounding in floating point arithmetic: numbers are represented up to a finite number of bits, which leads to \emph{usually} insignificant additive errors. These errors \emph{usually} imply good relative approximation. We define this more precisely now: 
  \begin{lemma}
  \label{Lemma:AddToRel}
  Let $R \in \mathbb{R}^+$ and $\epsilon \in [0,1)$. Assuming $\epsilon \leq R$, an $\frac{\epsilon^2}{2}$-additive approximation $\tilde{R}$ to $R$ is also an $\epsilon$-relative approximation to $R$.
\end{lemma}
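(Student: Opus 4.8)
The plan is to unpack both notions of approximation and reduce the claim to two elementary one-variable inequalities. The hypothesis $|R - \tilde R| \le \epsilon^2/2$ is equivalent to the two-sided bound $R - \tfrac{\epsilon^2}{2} \le \tilde R \le R + \tfrac{\epsilon^2}{2}$, so it suffices to check that this interval is contained in $[(1+\epsilon)^{-1}R,\ (1+\epsilon)R]$; that is, that $R + \tfrac{\epsilon^2}{2} \le (1+\epsilon) R$ and $(1+\epsilon)^{-1} R \le R - \tfrac{\epsilon^2}{2}$. Since $\tilde R$ lies between the two extreme values, establishing these two containments immediately yields the relative approximation guarantee.

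First I would handle the upper bound: $R + \tfrac{\epsilon^2}{2} \le (1+\epsilon)R$ rearranges to $\tfrac{\epsilon^2}{2} \le \epsilon R$, i.e.\ $\tfrac{\epsilon}{2} \le R$, which follows at once from the hypothesis $\epsilon \le R$ (and is vacuous when $\epsilon = 0$). Next the lower bound: $(1+\epsilon)^{-1}R \le R - \tfrac{\epsilon^2}{2}$ rearranges to $R \cdot \tfrac{\epsilon}{1+\epsilon} \ge \tfrac{\epsilon^2}{2}$, hence (for $\epsilon > 0$) to $R \ge \tfrac{\epsilon(1+\epsilon)}{2}$; since the standing assumption gives $\epsilon < 1$, we have $\tfrac{1+\epsilon}{2} < 1$ and therefore $\tfrac{\epsilon(1+\epsilon)}{2} < \epsilon \le R$. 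Combining the two bounds places $\tilde R$ in the required interval, proving the lemma.

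There is no genuine obstacle here — the argument is a short chain of algebra — so the only points worth flagging are bookkeeping ones: the lower bound genuinely uses both $\epsilon \le R$ and $\epsilon < 1$ (the latter already built into the hypothesis $\epsilon \in [0,1)$), whereas the upper bound needs only $\epsilon \le R$; and the constant $\tfrac{\epsilon^2}{2}$ is not tight (any additive error below $\tfrac{\epsilon R}{1+\epsilon}$ would do), but it is the clean value that later lets us say that a ``machine precision'' additive error translates into a relative guarantee whenever $R$ is not too small, which is exactly the regime in which $f_\theta(v)$ is evaluated.
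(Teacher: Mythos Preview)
Your proof is correct and follows essentially the same approach as the paper: both verify directly that the additive interval $[R-\tfrac{\epsilon^2}{2},\,R+\tfrac{\epsilon^2}{2}]$ is contained in $[(1+\epsilon)^{-1}R,\,(1+\epsilon)R]$ via two elementary inequalities using $\epsilon\le R$ and $\epsilon<1$. The only cosmetic difference is that the paper routes each bound through the intermediate point $(1\pm\tfrac{\epsilon}{2})R$, whereas you rearrange the target inequalities directly; neither yields anything the other does not.
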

\begin{proof}
This follows from:
\begin{align}
   (1+\epsilon)^{-1} R &\leq  \left(1-\frac{\epsilon}{2}\right)R \leq \left(R-\frac{\epsilon^2}{2}\right) \leq \tilde{R}, \\
   \tilde{R} &\leq \left( R + \frac{\epsilon^2}{2} \right) \leq \left(1 + \frac{\epsilon}{2} \right) R \leq (1+\epsilon)R.
\end{align}
\end{proof}

  Throughout this work, we will make a simplifying assumption that by \emph{evaluation to machine precision}, we mean evaluation to $\epsilon$-relative error in $poly(\log(1/\epsilon))$ time and justify it shortly.  We will often require evaluation of some quantities to machine precision in $poly(n, \log(1/\epsilon))$ time, where $n$ is the input size to the problem. From this, we bind the error parameter $\epsilon$ to the input size and simply assume that standard functions, such as $\exp$ or $\cosh$, can be efficiently evaluated to $2^{-poly(n)}$ error for arguments with magnitude bounded by some polynomial $poly(n)$ (similarly to Eq.~\ref{Eq:numerator}). 

The following is an important consequence of Lemma~\ref{Lemma:AddToRel}:
  
  \begin{corollary}
   \label{Cor:ratio}
  If $Q, R \geq 2^{-poly(n)}$ can be efficiently evaluated to machine precision, then so can $1/Q$, $QR$ and $Q/R$.
  \end{corollary}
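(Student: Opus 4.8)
The plan is to chain the two preceding results: Lemma~\ref{Lemma:AddToRel} converts the floating-point (additive) guarantees into relative ones, and Lemma~\ref{Lemma:Ratios} then propagates a relative error through a reciprocal, a product, and a quotient. The only real work is tracking the error and time parameters so that every overhead stays $poly(n, \log(1/\epsilon))$.

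First I would fix the target relative error $\epsilon$. We may assume $\epsilon \le \min(Q, R)$: if the given $\epsilon$ is larger, replace it by the lower bound $2^{-poly(n)} \le \min(Q,R)$, which only increases $\log(1/\epsilon)$ by an additive $poly(n)$. Set $\epsilon' := \epsilon/3$. By the hypothesis that $Q$ and $R$ are evaluable to machine precision I can, in time $poly(n, \log(1/\epsilon'))$, produce $\tilde Q, \tilde R$ with additive error at most $(\epsilon')^2/2$; since $\epsilon' \le Q, R$, Lemma~\ref{Lemma:AddToRel} promotes these to $\epsilon'$-relative approximations of $Q$ and $R$.

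Next I apply Lemma~\ref{Lemma:Ratios} with this budget. It gives that $\tilde Q\tilde R$ is a $3\epsilon' = \epsilon$-relative approximation to $QR$ and $\tilde Q/\tilde R$ a $3\epsilon' = \epsilon$-relative approximation to $Q/R$; for the reciprocal, the proof of Lemma~\ref{Lemma:Ratios} already records that $1/\tilde Q$ is an $\epsilon'$- (hence $\epsilon$-) relative approximation to $1/Q$. Each of $1/\tilde Q$, $\tilde Q\tilde R$, $\tilde Q/\tilde R$ is obtained by a single arithmetic operation on numbers with $poly(n)$-bounded binary exponents and $O(\log(1/\epsilon))$-bit mantissas, so forming the result and re-rounding it into the working format costs $poly(n, \log(1/\epsilon))$ and contributes only a further machine-epsilon relative error, absorbed by having started from a marginally smaller $\epsilon$. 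Moreover $1/Q$, $QR$, $Q/R$ are themselves $\ge 2^{-poly(n)}$, so they remain in the regime where a small relative error is the meaningful notion; hence each is efficiently evaluable to machine precision, which is the claim.

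The step that requires a little care — and the one place the hypothesis $Q, R \ge 2^{-poly(n)}$ is genuinely used — is the additive-to-relative conversion: we must meet the additive precision $(\epsilon')^2/2$ demanded by Lemma~\ref{Lemma:AddToRel} without an exponential blow-up, and its condition $\epsilon' \le Q, R$ must hold. The lower bound $2^{-poly(n)}$ on $Q$ and $R$ secures both: it keeps the required additive precision at worst $2^{-poly(n)}$, so $\log(1/\delta) = poly(n) + O(\log(1/\epsilon))$ and the assumed $poly(\log(1/\delta))$-time evaluation of $Q, R$ is still $poly(n, \log(1/\epsilon))$, and it lets us legitimately take $\epsilon' \le \min(Q, R)$. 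Everything else is direct substitution into Lemmas~\ref{Lemma:AddToRel} and~\ref{Lemma:Ratios}.
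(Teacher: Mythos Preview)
Your argument is correct and follows exactly the route the paper indicates: the paper's own ``proof'' is the single sentence that the corollary is a consequence of Lemma~\ref{Lemma:AddToRel} and the machine-precision discussion, and you have faithfully expanded this by first invoking Lemma~\ref{Lemma:AddToRel} (using the $2^{-poly(n)}$ lower bound to meet its hypothesis) and then Lemma~\ref{Lemma:Ratios} to push the relative error through the reciprocal, product, and quotient while keeping the time cost $poly(n,\log(1/\epsilon))$.
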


  \subsection{The key simplifying assumption}

  To make use of Corollary~\ref{Cor:ratio}, we will often need the following assumption regarding wavefunctions:
  \begin{assumption}
  \label{Ass:Key}
  We assume that $|\psi(v)| \geq 2^{-poly(n)}$ or $\psi(v) = 0$ for some sufficiently large polynomial $poly(n)$.
  \end{assumption}
  
  This always holds for any wavefunction represented by NQS, because for $\| \theta \|_\infty \leq poly_1(n)$, we have $f_{\theta}(v) \geq 2^{-poly_2(n)}$ for a suitable polynomial (perhaps distinct from $poly_1(n)$) and $Z \leq 2^{-n}$. The condition is not automatically guaranteed for other families of wavefunctions and this may become problematic for the access models that we study in the next section. The assumption also guarantees that a machine precision approximation to $\psi(v)$ is always representable by $poly(n)$ bits.

\section{Amplitude Ratios}
We first compare the NQS wavefunction access model to the \emph{pair-cond} (PCOND) query access to distributions from Refs.~\cite{Chakraborty13, Canonne12}. PCOND queries are strictly more powerful than sampling and we show that they can be simulated efficiently for any NQS distribution. This gives improved algorithms for distribution testing. It also leads to a modification of PCOND for quantum wavefunctions. We define this as amplitude ratio (AR) access and study it independently of NQS. We compare AR to sample and query (SQ) access used in dequantization \cite{Ewin18} and probabilistic simulation \cite{vanDenNest11}. We argue that AR is weaker access model than SQ with \emph{normalized} queries, but retains many classical simulation techniques of SQ.

\subsection{Pair-cond and NQS}
\label{Sec:PCOND}
NQS wavefunctions can be efficiently evaluated up to a normalization factor for any $v \in \lbrace -1 ,+1 \rbrace^n$ (Sec.~\ref{Sec:NQS}). This enables computation of amplitude ratios: \begin{lemma} Let $\psi(v)$ be an NQS over $\lbrace -1, +1\rbrace^n$. The ratio $\psi(i)/\psi(j)$ can be efficiently evaluated to machine precision for any $i,j\in \lbrace -1, +1 \rbrace^n$.
 \label{Lemma:PWR}
\end{lemma}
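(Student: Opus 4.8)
The plan is to notice that the (inaccessible) normalization $Z_\theta$ drops out of the ratio, reducing the claim to a statement about the numerator $f_\theta$ alone, which Sec.~\ref{Sec:NQS} already tells us is efficiently computable. Concretely, since $\psi(v) = f_\theta(v)/Z_\theta$ with the \emph{same} $Z_\theta$ for every $v$, we have
\begin{align}
\frac{\psi(i)}{\psi(j)} = \frac{f_\theta(i)/Z_\theta}{f_\theta(j)/Z_\theta} = \frac{f_\theta(i)}{f_\theta(j)},
\end{align}
so it suffices to evaluate $f_\theta(i)$ and $f_\theta(j)$ to machine precision and then divide.

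For the evaluation of $f_\theta(v)$: by Eq.~\ref{Eq:numerator}, $f_\theta(v)$ is a product of $m+1$ factors — one factor $\exp(a^\intercal v)$ and $m$ factors $2\cosh\!\big(b_k + \sum_j v_j W_{jk}\big)$ — each the value of $\exp$ or $\cosh$ at an argument that is a sum of at most $n+1$ complex numbers of magnitude at most $\|\theta\|_\infty \le poly(n)$. Under the machine-precision conventions of Sec.~\ref{Sec:RelAppx}, each individual factor is computable to $\delta$-relative error in time $poly(n,\log(1/\delta))$, and by iterating Lemma~\ref{Lemma:Ratios} (in the polar-form convention for complex numbers) the full product is obtained to $O(m)\,\delta$-relative error; taking $\delta = \epsilon/O(m)$ gives an $\epsilon$-relative approximation to $f_\theta(v)$ in time $poly(n,m,\log(1/\epsilon))$.

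The last step is to turn approximations of $f_\theta(i)$ and $f_\theta(j)$ into an approximation of their quotient, which is exactly what Corollary~\ref{Cor:ratio} provides — provided both are bounded below by $2^{-poly(n)}$. By the discussion following Assumption~\ref{Ass:Key}, for NQS with $\|\theta\|_\infty \le poly(n)$ one has $|f_\theta(v)| \ge 2^{-poly(n)}$ whenever $f_\theta(v)\neq 0$, so Corollary~\ref{Cor:ratio} applies and yields $\psi(i)/\psi(j) = f_\theta(i)/f_\theta(j)$ to machine precision, as claimed.

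The step I would be most careful about is this last lower bound on the denominator: the lemma is only meaningful when $\psi(j)\neq 0$, and division is numerically safe only if $|f_\theta(j)|$ is bounded away from $0$ — a priori a single $2\cosh$ factor of a complex argument can be arbitrarily small, and indeed Sec.~\ref{Sec:Hardness} constructs NQS for which $f_\theta$ vanishes at some configurations. The content of the proof therefore rests on the bound quoted below Assumption~\ref{Ass:Key}, namely that a polynomially-bounded (and polynomially-specified) $\theta$ cannot push a nonzero $f_\theta(v)$ below $2^{-poly(n)}$; granting that, everything else is bookkeeping with Lemma~\ref{Lemma:Ratios} and Corollary~\ref{Cor:ratio}.
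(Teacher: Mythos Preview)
Your proposal is correct and follows essentially the same route as the paper: cancel the normalization $Z_\theta$ to reduce to $f_\theta(i)/f_\theta(j)$, observe from Eq.~\ref{Eq:numerator} and the bound $\|\theta\|_\infty \le poly(n)$ that $f_\theta$ is efficiently computable to machine precision, and then invoke Corollary~\ref{Cor:ratio} for the quotient. If anything, you are more explicit than the paper about the error propagation through the $m+1$ factors and about the $|f_\theta(v)|\ge 2^{-poly(n)}$ hypothesis needed for Corollary~\ref{Cor:ratio}, which the paper's proof invokes without comment.
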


\begin{proof}
Observe that $\psi(i)/\psi(j) = f_\theta(i) / f_\theta(j)$. The claim follows from:
\begin{align}
    f_\theta(v) &= \exp(a^\intercal v) \prod_{i=1}^m 2\cosh \left( b_i + \sum_j^n v_j W_{ji}\right),
\end{align}
(see Eq.~\ref{Eq:numerator}) and noting that each of the $m$ factors only depends on $\sum_{j}^n v_j W_{ji}$, which has at most $n$ terms. Since we assumed that $\| \theta \|_\infty \leq poly(n)$ below Eq.~\ref{Eq:params}, $f_\theta(v)$ can be efficiently evaluated to machine precision (Sec.~\ref{Sec:RelAppx}).  The result follows from Corollary~\ref{Cor:ratio}.
\end{proof}

There is an analogy between amplitude comparison and \emph{pair-cond} (PCOND) oracle access used in conditional distribution testing  \cite{Canonne12}. 

\begin{definition}[PCOND \cite{Canonne12}]
Let $p$ be a probability distribution over $\Omega$. The PCOND oracle accepts an input set $S$ that is either $S = \Omega$ or $S = \lbrace i, j \rbrace$ for some $i, j \in \Omega$. It returns an element $i \in S$ with probability $p(i)/p(S)$, where $p(S) = \sum_{i \in S} p(i)$.
\label{Def:PCOND}
\end{definition}

Refs.~\cite{Canonne12, Chakraborty13} show that PCOND queries lead to significant complexity improvement for some distribution testing tasks. We observe that:

\begin{theorem}
\textnormal{PCOND} (Def.~\ref{Def:PCOND}) can be instantiated efficiently for any NQS that can be efficiently sampled from.
\label{Thm:PCOND_for_NQS}
\end{theorem}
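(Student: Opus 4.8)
The plan is to instantiate the PCOND oracle of Definition~\ref{Def:PCOND} for the Born distribution $p_\theta(v) = |\psi_\theta(v)|^2$ by handling the two admissible input sets separately. For the input $S = \Omega = \lbrace -1,+1\rbrace^n$, the oracle must return a sample $v$ with probability $p_\theta(v)$; this is exactly a draw from the Born distribution, which is available by hypothesis since we assume the NQS can be efficiently sampled from. So the only real work is the pair case $S = \lbrace i, j \rbrace$.

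For $S = \lbrace i,j\rbrace$, I need to output $i$ with probability $p_\theta(i)/(p_\theta(i)+p_\theta(j))$ and $j$ with the complementary probability. The key observation is that this conditional probability depends only on the \emph{ratio} of the two Born weights, which in turn depends only on the ratio of the unnormalized amplitudes: writing $r := |\psi_\theta(i)/\psi_\theta(j)|^2 = |f_\theta(i)/f_\theta(j)|^2$, the desired probability of returning $i$ is $r/(1+r)$. By Lemma~\ref{Lemma:PWR} the ratio $\psi_\theta(i)/\psi_\theta(j) = f_\theta(i)/f_\theta(j)$ can be efficiently evaluated to machine precision (here I would invoke Assumption~\ref{Ass:Key} to know the ratio is either well-defined or that one of the amplitudes vanishes, in which case the conditional distribution is a point mass and trivially sampled). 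Squaring the modulus and forming $r/(1+r)$ is again an efficient machine-precision computation by Corollary~\ref{Cor:ratio} (composition of products, sums of nonnegative quantities bounded below by $2^{-poly(n)}$, and quotients). One then flips a suitably biased coin: draw a uniform real in $[0,1]$ to the relevant precision and compare against the computed value $r/(1+r)$.

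I would then note that the degradation from machine-precision arithmetic to the sampling output is benign: the computed bias differs from the true bias $r/(1+r)$ by at most $2^{-poly(n)}$ in total variation for the coin flip, so the simulated PCOND oracle is within inverse-exponential statistical distance of the ideal one, which is more than sufficient for the downstream testing applications (and matches the spirit of ``machine precision'' used throughout). The edge cases to mention explicitly: if $i = j$ then $S$ is a singleton and the oracle returns $i$ deterministically; if exactly one of $f_\theta(i), f_\theta(j)$ is zero the oracle returns the other deterministically; if both are zero the conditional distribution is undefined but this event has probability zero under $\Omega$-queries and can be handled by an arbitrary convention.

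The main obstacle is essentially bookkeeping rather than conceptual: one must be careful that forming $r = |f_\theta(i)|^2/|f_\theta(j)|^2$ and then $r/(1+r)$ stays within the ``efficiently evaluable to machine precision'' regime, i.e.\ that all intermediate quantities are bounded below by $2^{-poly(n)}$ so that Corollary~\ref{Cor:ratio} applies — this is where Assumption~\ref{Ass:Key} and the explicit NQS lower bound $f_\theta(v) \geq 2^{-poly(n)}$ from Section~\ref{Sec:RelAppx} do the heavy lifting. There is no hardness barrier here; the content of the theorem is simply the recognition that PCOND on a pair reduces to computing one amplitude ratio (Lemma~\ref{Lemma:PWR}) plus one biased coin flip, and PCOND on $\Omega$ is just Born sampling.
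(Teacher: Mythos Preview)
Your proposal is correct and follows essentially the same argument as the paper: handle $S=\Omega$ by the assumed Born sampler, and for $S=\{i,j\}$ compute the bias from the amplitude ratio $f_\theta(i)/f_\theta(j)$ (the paper writes it as $(1+|f_\theta(j)/f_\theta(i)|^2)^{-1}$, algebraically equivalent to your $r/(1+r)$) and flip a biased coin, with the same remark that machine-precision error is negligible for polynomially many queries. Your treatment of edge cases and the explicit invocation of Lemma~\ref{Lemma:PWR} and Corollary~\ref{Cor:ratio} is slightly more detailed than the paper's, but the content is identical.
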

\begin{proof}
Assume that the NQS encodes a wavefunction $\psi_\theta(v)$ over $\lbrace -1, 1\rbrace^n$. From Eq.~\ref{Eq:numerator}, we have that $\psi_\theta(v) = f_\theta(v) / Z_\theta$, where $f_\theta(v)$ can be computed efficiently to machine precision. 

To show that NQS allows PCOND queries,
fix $\Omega = \lbrace -1, 1\rbrace^n$ in the definition of PCOND.

\begin{enumerate}
    \item On input $S = \Omega$, output a sample from the Born distribution encoded by the NQS. 
    
    \item On input $S = \lbrace i, j \rbrace$ for $i,j \in \Omega$, compute: \begin{align}
\label{Eq:Flip}
r = \frac{|f_{\theta}(i)|^2}{|f_{\theta}(i)|^2 + |f_{\theta}(j)|^2} = \left(1 + \left|\frac{f_{\theta}(j)}{f_{\theta}(i)}\right|^2 \right)^{-1}.\end{align} Return $i$ or $j$ according to a $r$-biased coin flip (if both $|f_{\theta}(i)|^2$ and $|f_{\theta}(j)|^2$ are zero, return one of $i,j$ u.a.r).
\end{enumerate}
The ratio in Eq.~\ref{Eq:Flip} is not computed exactly, but we assume that the deviation from its \emph{exact} value will be only observed for exponentially many coin flips in the input size. Since we are interested in algorithms that make at most polynomialy many PCOND queries, we neglect this.
\end{proof}
Theorem~\ref{Thm:PCOND_for_NQS} enables efficient implementation of algorithms for NQS distributional testing; see Tab.~\ref{tab:Cannone_results}. 
\begin{table}[t]
\centering
    \begin{tabular}{|m{4.5cm}|m{3cm}| m{3cm}|}
    \hline 
       {\textbf{Problem}}  & {Random samples} & {PCOND queries} \\
       \hline
        Is $p_\theta(v)$ $\epsilon$-close to uniform distribution? & $\Theta(2^{n/2} / \epsilon^2)$ & $\tilde{O}(1/\epsilon^2)$ \cite{Canonne12}\\
        \hline
        Is $p_\theta(v)$ $\epsilon$-close to a known distribution $q(v)$? & $\Theta(2^{n/2} / \epsilon^2)$ & $\tilde{O}(n^4/\epsilon^4)$~\cite{Canonne12} \\
        \hline
        Are $p_{\theta_1}(v)$, $p_{\theta_2}(v)$ $\epsilon$-close? & $\Omega(2^{n/2} / \epsilon^2)$ & $\tilde{O}(n^6/\epsilon^{21})$ \cite{Canonne12} \\
        \hline
    \end{tabular}
    \caption{PCOND results of Ref.~\cite{Canonne12}. Here $\epsilon$-close means $\epsilon$-close in the total variational distance: $d(p,q)_{\text{TV}} := \frac{1}{2} \| p-q\|_1 = \frac{1}{2} \sum_x |p(v) - q(v)|$. Since Sec.~\ref{Sec:PCOND} shows that any efficiently samplable NQS has an efficient PCOND oracle, the sample complexity improvements translate to improvements in the runtime of distribution testing algorithms for NQS.}
    \label{tab:Cannone_results}
\end{table} The algorithms test the total variation distance of two NQS Born distributions and have exponential advantage in their runtime compared to sampling algorithms. 

\paragraph{Procedure compare} The key algorithmic tool used in Ref.~\cite{Canonne12} is a procedure \emph{compare} that uses PCOND queries for estimating probability ratios. {Given that the ratio $f_\theta(i)/f_\theta(j)$ is not too large or too small, the procedute compare outputs its $1/poly(n)$ additive approximation with high probability of success using $poly(n)$ many PCOND queries \cite{Canonne12}.}  For NQS, this ratio can be efficiently computed to machine precision, which offers further simplifications of the algorithms. This motivates the definition of \emph{amplitude ratio} wavefunction access model introduced in the next section.

\paragraph{PCOND and RBM:} PCOND can be also instantiated for RBMs. The PCOND algorithms of  Ref.~\cite{Canonne12} apply with little modification to RBMs and imply significant speedups for some distribution testing tasks. Results of Ref.~\cite{Canonne12} were presented as a part of a theoretical analysis in conditional property testing and do not have runtimes that would make them immediately practical. No natural and efficient instantiation of the PCOND oracle (such as the one in the context of RBMs) has been known to the authors in mid 2021~\cite{cannone_pc}. It is therefore possible that the algorithms could be optimized and perhaps used in practice.

\paragraph{Previous quantum information work on PCOND:}
PCOND access was studied from the quantum computing viewpoint by Sardharwalla, Strelchuk and Jozsa in Ref.~\cite{Sardharwalla16}, but the AR definition (introduced below) is new. The difference between their definition and AR is that their quantum extension of PCOND oracle (PQCOND) provides access to conditional probabilities associated with the underlying distribution -- their oracles are defined at a level of \emph{quantum states}, while the AR access is always defined with respect to some fixed basis/\emph{wavefunction}. The authors give a PQCOND version of the \emph{compare} procedure from Ref.~\cite{Canonne12} and show that PQCOND queries have polynomial improvements upon many of the PCOND distribution testing results. They also derive results on boolean distribution testing and quantum spectrum testing. 
There does not seem to be an obvious connection between AR access and PQCOND, but it would be interesting to understand this better.

\subsection{Amplitude Ratio (AR) Access}
Because NQS allows for computation of \emph{amplitude ratios} to high precision, they should offer stronger access to the wavefunction than PCOND. We study this as a wavefunction access model, independently of NQS. 
\begin{definition}[Exact AR]
\label{Def:AR_exact}
Let $\psi: \Omega \rightarrow \mathbb{C}$ be a wavefunction over $\Omega$. The $\textnormal{AR}$ oracle accepts as an input either an ordered pair $S = ( i, j )$ or $S = \Omega$. If $S = \Omega$, it returns a random sample from the Born distribution $|\psi(v)|^2$ over $v \in \lbrace -1, 1 \rbrace^n$. 
If $S = (i, j)$, it returns $\psi(i)/\psi(j)$. If the ratio diverges, it returns a special symbol `DIV' and it conventionally returns $1$ for $\psi(i) = \psi(j) = 0$.
\end{definition}

Definition~\ref{Def:AR_exact} assumes that the queries are returned exactly. This becomes problematic if $\psi(i)/\psi(j)$ becomes irrational number as the query result is an infinitelly-long output. This issue can be dealt with as follows:

\begin{definition}[AR]
Let $\psi: \Omega \rightarrow \mathbb{C}$ be a wavefunction over $\Omega$ subject to Assumption~\ref{Ass:Key}. For $\epsilon \in [0, 1)$, the $\textnormal{AR}(\epsilon)$ oracle accepts as an input either an ordered pair $S = (i, j )$ or $S = \Omega$. If $S = \Omega$, it returns a random sample from the Born distribution $|\psi(v)|^2$ over $v \in \lbrace -1, 1 \rbrace^n$. 
If $S = ( i, j )$, it returns an $\epsilon$-relative approximation to $\psi(i)/\psi(j)$.\footnote{To make the queries deterministic, one can force this to be the approximation with the shortest binary expansion.} If the ratio diverges, it returns `DIV' and it conventionally returns $1$ for $\psi(i) = \psi(j) = 0$.
We write AR := AR($\epsilon$) if $\epsilon$ scales as $2^{-poly(n)}$ for some polynomial $poly(n)$.
\footnote{The AR oracle can be defined rigorously in the oracle Turing machine model. Let the author know if you ever need this.}
\label{Def:AR_REL}
\end{definition}

\subsection{SQ vs AR}
\label{Sec:Comparing}
We now compare AR with another type of access to quantum wavefunctions, the \emph{sample and query} (SQ) access. SQ was defined in Ref.~\cite{Ewin18}, but was previously used as \emph{computational tractability} with additional computational requirements in Ref.~\cite{vanDenNest11}. There is a difference between the two definitions: Ref.~\cite{Ewin18} defines SQ access \emph{without} normalizing the queries, but subsequenly assumes knowledge of the normalization factor (see for example Prop 4.2.~\cite{Ewin18}), while Ref.~\cite{vanDenNest11} assumes normalization (as well as efficiency) in the definition of computationally tractable states (Ref.~\cite{vanDenNest11}, Def.~1). Tang's definition of SQ also does not treat the underlying object as a wavefunction, but more generally as a real valued vector with $\ell_2$-norm sampling access. This presentation puts less emphasis on the need for the normalization factor. Here we use the SQ access model assuming that \emph{the queries yield normalized amplitudes}, but impose no efficiency constraints.  

\begin{definition}[Exact SQ]
Let $\psi: \Omega \rightarrow \mathbb{C}$ be a wavefunction over $\Omega$. The wavefunction has $\textnormal{SQ}$ access if $\psi(i)$ can be computed for any $i \in \lbrace -1, +1 \rbrace^n$ and its Born distribution $|\psi(i)|^2$ can be sampled from.
\end{definition}

The above definition has the same problem as \emph{exact} AR (Def.~\ref{Def:AR_exact}): if the result of the query is an irrational number, the result of the query will not have bounded size. 
We update it as follows:
\begin{definition}[SQ]
\label{Def:Relative_SQ}
Let $\psi: \Omega \rightarrow \mathbb{C}$ be a wavefunction over $\Omega$ subject to Assumption~\ref{Ass:Key}. For $\epsilon \in [0, 1)$, the $\textnormal{SQ}(\epsilon)$ oracle accepts as an input either a singleton set $S = \lbrace i \rbrace$ or $S = \Omega$. If $S = \Omega$, it returns a random sample from the Born distribution $|\psi(v)|^2$ over $v \in \lbrace -1, 1 \rbrace^n$. 
If $S = \lbrace i \rbrace$, it returns an $\epsilon$-relative approximation to $\psi(i)$ with the shortest binary expansion.
We write SQ := SQ($\epsilon$) if $\epsilon$ scales as $2^{-poly(n)}$ for some polynomial $poly(n)$.
\end{definition}

\paragraph{AR and unnormalized SQ}
\label{Sec:ARvsSQ}
Tang used SQ with unnormalized queries in \cite{Ewin18} but subsequently assumed knowledge of the normalizing factor to the wavefunction throughout the work. Without knowledge of the normalization factor, the only information about the vector with such access comes from amplitude ratios and sampling. This way, the definition is essentially the same as AR -- so why bother with AR? 

The key reason for defining AR is to emphasize that the normalization factor of the wavefunction is simply unavailable, aside from its empirical estimate by sampling. It seems problematic to guarantee this (at least somewhat) rigorously: does evaluation of the state up to an ``arbitrary'' normalization factor always force you to not use it? \emph{That subtlety aside, one can also see AR as unnormalized SQ access and the rest of this section as comparison between normalized and unnormalized variants of SQ.}

\paragraph{AR is not stronger than SQ} AR can be simulated by SQ:
\begin{lemma} Under Assumption~\ref{Ass:Key}, an AR($\epsilon$) query can be simulated by two SQ(${\epsilon}/3$) queries.
\label{Lemma:AR_equals_SQ}
\end{lemma}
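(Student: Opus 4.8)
The plan is to simulate an AR($\epsilon$) query on the pair $S=(i,j)$ by issuing two SQ($\epsilon/3$) queries, namely $S=\{i\}$ and $S=\{j\}$, and dividing the results; for $S=\Omega$ the AR and SQ oracles both just return a Born sample, so there is nothing to do. So fix $S=(i,j)$. Let $\tilde\psi(i)$ and $\tilde\psi(j)$ be the outputs of SQ($\epsilon/3$) on $\{i\}$ and $\{j\}$; by Definition~\ref{Def:Relative_SQ} these are $(\epsilon/3)$-relative approximations to $\psi(i)$ and $\psi(j)$ in the polar-form sense of Sec.~\ref{Sec:RelAppx} (magnitude within a $(1+\epsilon/3)^{\pm1}$ factor, phase exact).

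Next I would invoke Lemma~\ref{Lemma:Ratios}: since $\tilde\psi(i),\tilde\psi(j)$ are $(\epsilon/3)$-relative approximations to $\psi(i),\psi(j)$, the quotient $\tilde\psi(i)/\tilde\psi(j)$ is a $3\cdot(\epsilon/3)=\epsilon$-relative approximation to $\psi(i)/\psi(j)$, which is exactly what AR($\epsilon$) is required to return. (The output need not have the shortest binary expansion, but one can round $\tilde\psi(i)/\tilde\psi(j)$ to the approximation of shortest binary expansion that still lies in the valid window, as in the footnote to Definition~\ref{Def:AR_REL}; alternatively, one simply notes both oracles are defined up to the choice of a valid representative.) The remaining work is to handle the degenerate cases so that the simulation matches the oracle's conventions: if the SQ queries reveal $\psi(j)=0$ (equivalently the returned $\tilde\psi(j)$ is $0$, which under Assumption~\ref{Ass:Key} happens iff $\psi(j)=0$) while $\psi(i)\neq0$, output `DIV'; if both $\psi(i)=\psi(j)=0$, output $1$; otherwise output the rounded quotient. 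Assumption~\ref{Ass:Key} is what makes these case distinctions clean — it guarantees $|\psi(v)|$ is either $0$ or at least $2^{-\mathrm{poly}(n)}$, so a relative approximation to $\psi(v)$ faithfully detects which case we are in, and it guarantees the quotient is representable in $\mathrm{poly}(n)$ bits.

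I expect the only real subtlety — not so much an obstacle as a point requiring care — to be the bookkeeping around the "shortest binary expansion" normalization and the DIV/zero conventions: one must check that the window of valid $\epsilon$-relative approximations to $\psi(i)/\psi(j)$ is nonempty and contains a short representative, and that the zero-detection via SQ is exact under Assumption~\ref{Ass:Key}. Everything else is a one-line appeal to Lemma~\ref{Lemma:Ratios} with the error budget $\epsilon/3 \mapsto \epsilon$. I would therefore structure the proof as: (1) $S=\Omega$ case is trivial; (2) describe the two SQ($\epsilon/3$) calls and the quotient; (3) apply Lemma~\ref{Lemma:Ratios} to get the $\epsilon$-relative bound; (4) dispatch the DIV and all-zero cases using Assumption~\ref{Ass:Key}; (5) remark on rounding to the canonical representative. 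This yields the claimed simulation by two SQ($\epsilon/3$) queries.
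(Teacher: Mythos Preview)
Your proposal is correct and follows essentially the same approach as the paper: query SQ($\epsilon/3$) at $i$ and $j$, output the quotient, and invoke Lemma~\ref{Lemma:Ratios} to conclude it is an $\epsilon$-relative approximation to $\psi(i)/\psi(j)$. The paper's proof is a two-line version that omits your (sound) bookkeeping on the $S=\Omega$ case, the DIV/zero conventions, and the canonical-representative rounding.
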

\begin{proof}
Let $\tilde{\psi}(i)$ and $\tilde{\psi}(j)$ be results of querying SQ($\epsilon/3$) on $i$ and $j$ respectivelly. To simulate AR($\epsilon$), output $\tilde{\psi}(i)/\tilde{\psi}(j)$. This is an $\epsilon$-approximation to ${\psi}(i)/{\psi}(j)$  by Lemma~\ref{Lemma:Ratios}. 
\end{proof}

\paragraph{Evidence that AR is weaker than SQ:}
To show that AR access is \emph{in some sense} weaker than SQ, we show conditional separation between variants of the two models under \emph{efficiency} constraints. SQ is related to the concept of computationally tractable (CT) states~\cite{vanDenNest11}:
\begin{definition}[Computationally Tractable (CT) states]
An $n$-qubit wavefunction $\psi(v)$ over $v\in \lbrace-1,+1\rbrace^n$, subject to Assumption~\ref{Ass:Key}, is computationally tractable (CT) if both queries in Def.~\ref{Def:Relative_SQ} can be implemented with a $poly(n)$-time randomized algorithm.\footnote{Van den Nest uses exact queries, but also writes that his definition generalizes straightforwardly to computation with exponential additive precision. This definition is a modification of Def.~1, Ref.~\cite{vanDenNest11}, which also does not need the Assumption~\ref{Ass:Key}}

\end{definition}
This can be seen as SQ with efficient classical sampler and efficient classical algorithm for the amplitude queries. The key capability of CT states is an efficient algorithm that, given two CT wavefunctions $\psi(v)$ and $\phi(v)$, approximates $\braket{\psi |\phi}$ in polynomial time to inverse-polynomial precision (see Theorem~3. and Lemma~3. in Ref.~\cite{vanDenNest11} or equivalently Prop~4.8. in Ref.~\cite{Ewin18}). This enables estimation of constant-local bounded observables on CT states or simulation of sparse quantum circuits. Techniques related to CT framework were used in dequantization algorithms in Ref.~\cite{Ewin18} or quantum algorithm analysis in Ref.~\cite{Havlicek18}. 

Analogously to CT states, we define  \emph{amplitude ratio \textnormal{(AR)} states} and show that their \emph{fidelities} and expectation values on constant-local observables can be also efficiently computed. They subsume CT states, which suggests that the CT requirement can be relaxed to AR in many applications.
\begin{definition}[Amplitude Ratio (AR) states]
An $n$-qubit wavefunction $\psi(v)$ over $v\in \lbrace-1,+1\rbrace^n$, subject to Assumption~\ref{Ass:Key}, is an amplitude ratio (AR) state if queries to AR in Def.~\ref{Def:AR_REL} can be implemented with a $poly(n)$-time randomized algorithm.
\end{definition}

We show that, subject to Assumption~\ref{Ass:Key}, all CT states are AR states. Our \emph{evidence} that SQ access is somewhat stronger than the AR access follows from that for \emph{machine precision}, not all AR states are CT, unless $\#\P = \textsf{FBPP}$.\footnote{$\#\P$ is the class of function problems that enumerate accepting paths to $\NP$ problems \cite{Valiant79}. $\textsf{FBPP}$ is the class of functions computable in polynomial time with a probabilistic Turing machine.}  This is shown using the separation between (exact) counting and uniform sampling by Jerrum, Valiant and Vazirani (Sec.~4 \cite{Jerrum86}).

\begin{theorem}
\label{Thm:CT_AR}
All CT states are AR. There exist AR states that are not CT unless $\# \P = \textnormal{\textsf{FBPP}}$.
\end{theorem}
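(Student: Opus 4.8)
The plan is to prove the two halves separately. For the first half — every CT state is an AR state — I would simply compose the two efficient procedures guaranteed by the CT definition. Given a CT wavefunction $\psi$, on input $S = \Omega$ the AR oracle needs a Born sample, which the CT sampler provides in $\mathrm{poly}(n)$ time. On input $S = (i,j)$ the AR oracle needs an $\epsilon$-relative approximation to $\psi(i)/\psi(j)$; by the CT definition I can compute $\tilde\psi(i)$ and $\tilde\psi(j)$, each an $(\epsilon/3)$-relative approximation (this is where Assumption~\ref{Ass:Key} is used, so that the queried amplitudes are either genuinely zero — handled by the conventions in Def.~\ref{Def:AR_REL} — or bounded below by $2^{-\mathrm{poly}(n)}$, making relative approximation meaningful and the $1/\tilde\psi(j)$ step safe), and then output $\tilde\psi(i)/\tilde\psi(j)$. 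By Lemma~\ref{Lemma:Ratios} this is an $\epsilon$-relative approximation to $\psi(i)/\psi(j)$. This is exactly the content of Lemma~\ref{Lemma:AR_equals_SQ} promoted to the efficient setting, and it runs in $\mathrm{poly}(n)$ time, so $\psi$ is AR.

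For the second half I need to exhibit a wavefunction family that has efficient AR access but no efficient SQ/CT access unless $\#\P = \textsf{FBPP}$. The natural source is the Jerrum–Valiant–Vazirani separation between exact counting and almost-uniform sampling (Sec.~4 of \cite{Jerrum86}): there is a combinatorial structure — self-reducible, e.g.\ satisfying assignments of a formula, or more robustly a suitable variant — from which one can sample (almost) uniformly in randomized polynomial time, yet computing the exact count is $\#\P$-hard. I would take $\Omega = \{-1,+1\}^n$ and let $\psi$ be (proportional to) the indicator of the support $A \subseteq \Omega$ of such a structure, so $\psi(v) = 1/\sqrt{|A|}$ for $v \in A$ and $\psi(v) = 0$ otherwise. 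The Born distribution is then uniform on $A$, which by JVV can be sampled in $\mathrm{poly}(n)$ time, giving the $S=\Omega$ branch. For the $S=(i,j)$ branch: self-reducibility lets me test membership $i \in A$, $j \in A$ in polynomial time, and then the ratio $\psi(i)/\psi(j)$ is one of $\{1, 0, \text{DIV}\}$ — precisely the cases Def.~\ref{Def:AR_REL} handles by convention — all decided efficiently. Hence $\psi$ is an AR state. On the other hand, an efficient SQ/CT algorithm would compute an $\epsilon$-relative (with $\epsilon = 2^{-\mathrm{poly}(n)}$) approximation to $\psi(i) = 1/\sqrt{|A|}$ for $i \in A$, from which $|A|$ can be recovered exactly (the count is an integer and the relative error is far smaller than $1/(2|A|)$), yielding a $\textsf{FBPP}$ algorithm for a $\#\P$-hard counting problem and collapsing $\#\P = \textsf{FBPP}$.

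A couple of technical points need care. First, Assumption~\ref{Ass:Key} must hold for the constructed $\psi$: the nonzero amplitudes are $1/\sqrt{|A|} \geq 2^{-n/2}$, which is fine, and the zero amplitudes are exactly zero, so the assumption is satisfied verbatim. Second, I should make sure the chosen combinatorial family simultaneously supports (a) polynomial-time membership/self-reducibility and (b) the JVV almost-uniform sampler, while (c) exact counting remains $\#\P$-complete; the class of $\#\P$-complete self-reducible relations with efficient almost-uniform samplers conditioned on $\mathsf{RP} = \mathsf{NP}$-type hypotheses is delicate, so I would either cite a clean statement from \cite{Jerrum86} directly or, more conservatively, phrase the separation exactly as they do — "there is a relation for which sampling is easy but counting is $\#\P$-hard" — and observe that the SQ amplitude query on that relation's indicator wavefunction is a counting oracle. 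Third, I should note that an AR query returning a ratio never leaks $|A|$: amplitude ratios are scale-invariant, which is the whole conceptual point of the AR model, so no amount of polynomially many AR queries can determine the normalization, consistent with the separation.

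The main obstacle is the second point above: pinning down a single combinatorial object (or an appropriate promise version) that genuinely separates efficient sampling from efficient exact counting in the unconditional/standard-assumption sense, and verifying that its indicator wavefunction meets the AR-state definition with $\mathrm{poly}(n)$-time oracles. Once that object is fixed, both directions are short; the content is entirely in correctly importing the Jerrum–Valiant–Vazirani separation and checking that a relative-error amplitude approximation is strong enough to pin down an integer count.
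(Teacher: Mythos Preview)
Your approach is essentially the paper's. The first half is exactly Lemma~\ref{Lemma:AR_equals_SQ} promoted to the efficient setting, and for the second half the paper likewise builds the indicator wavefunction over a set where uniform sampling is easy but exact counting is $\#\P$-hard. The concrete object the paper chooses---resolving what you flagged as the main obstacle---is the set of satisfying assignments to a \emph{DNF formula}: membership is trivially in $\P$ (just evaluate the formula), uniform sampling over satisfying assignments is unconditionally in randomized polynomial time via the Karp--Luby / JVV procedure (reproduced as Algorithm~\ref{alg:DNF}), and exact $\#$DNF is $\#\P$-complete. This sidesteps the self-reducibility machinery and the $\mathsf{RP}=\mathsf{NP}$-type worries you raised, since no hardness-of-decision enters. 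The final step---that a $2^{-\mathrm{poly}(n)}$-relative approximation to $1/\sqrt{Z}$ pins down the integer $Z \le 2^n$ exactly---is carried out just as you sketched.
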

\begin{proof}
(All CT states are AR): Both models perform AR and SQ queries in $poly(n)$ time. The rest follows from Lemma~\ref{Lemma:AR_equals_SQ}.

\medskip

(There is an exact AR state that is not CT):
The proof uses the observation that uniform sampling of solutions to a given boolean formula over $n$ variables in disjunctive normal form (DNF) is easy, while their exact enumeration is $\# \P$-complete (Sec.~4 of Ref.~\cite{Jerrum85}).  
We consider a quantum state that is a uniform superposition over satisfying assignments to the boolean formula and show that its Born distribution can be easily sampled from. The normalization factor of such state counts the number of satisfying solutions to the formula, which is $\# \P$-complete to compute exactly. We show that a good relative approximation to the amplitude determines this quantity.

Given a boolean formula in DNF over $n$ variables, interpret $v \in \lbrace -1, +1\rbrace^n$ as an assignment to its variables: if $v_i = +1$, then the $i$-th variable is true and if $v_i = -1$, the $i$-th variable is false. Define the DNF formula-``state'' as follows:
\begin{align}
    \psi_{\textnormal{DNF}}(v) &= \begin{cases} \frac{1}{\sqrt{Z}} \textnormal{ if } v \text{ satisfies the formula}. \\
    0 \textnormal{ otherwise.}
    \end{cases}
\end{align}
Notice that $Z$ is the number of variable assignments that satisfy the formula. 

For any input $v$, the predicate $|\psi_{\textnormal{DNF}}(v)| > 0$ can be tested by plugging in the variable assignments into the formula. Any non-zero amplitude evaluates to $1/\sqrt{Z}$, from which it is possible to compute the amplitude ratio as required by exact AR. 

The Born distribution $|\psi_{\textnormal{DNF}}(v)|^2$ is a uniform distribution over the satisfiying assignments to the boolean formula. This distribution can be sampled from exactly by a polynomial-time randomized algorithm described in Sec.~4 of Ref.~\cite{Jerrum86}. It works as follows: For a DNF boolean formula $F = F_1 \vee F_2 \ldots \vee F_m$ in $n$ variables, where $F_i$ is a conjunction of literals for all $i \in [1, m]$, let $S_j \subseteq \lbrace -1, +1 \rbrace^n$ be the set of satisfying assignments to $F_j$. Note that $|S_j|$ is easy to compute, because $F_j$ is only satisfied if it fixes a subset of the variables involved in it and the remaining variables can take arbitrary values. Let $S = \bigcup_j S_j$ the set of all satisfying assignments to $F$. The aim is to sample uniformly over $S$, which is achieved by Algorithm~\ref{alg:DNF}.

\begin{algorithm}
\caption{Uniform sampling of DNF assignments \cite{Jerrum86}}\label{alg:DNF}
\begin{algorithmic}
\For{$i \in 1 \ldots m$}: 
   \State $j \gets$ random integer from $[1,m]$ chosen proportionally to $|S_j|/\sum_k |S_k|$. 
   \State $a \gets$ random element of $S_j$.
   \State $N \gets |\lbrace k \in [1,m] : a \in S_k \rbrace|$.
   \State With $1/N$ probability, output $a$ and halt.
\EndFor
\end{algorithmic}
\end{algorithm}

The algorithm does not halt with small probability. If it does not halt, rerun the for-loop. The output of the algorithm is a uniformly random satisfying variable assignment. This implies that $\psi_{\textnormal{DNF}}(v)$ is an exact-AR state.\footnote{As discussed in Ref.~\cite{Jerrum86} Sec.~4, Algorithm 1 cannot be implemented on a probabilistic Turing Machine. A modification of it, also in \cite{Jerrum86} Sec.~4, deals with this subtlety and leads to an $\textsf{FBPP}$ algorithm.}

The family of states $\psi_\text{DNF}$ is not exact-CT unless $\#\P = \textsf{FBPP}$, because $\psi_\text{DNF}(v) = 1/\sqrt{Z}$ for any satisfying variable assignment $v$. Evaluating this exactly is a $\#\P$-complete problem, because $Z$ is the number of satisfying solutions to the boolean formula. So unless $\textsf{FBPP} = \#\P$, $\psi_\text{DNF}(v)$ is an exact-AR state that is not exact-CT.

It remains to show that $\psi_\text{DNF}(v)$ is not CT. To do this, we show that CT can compute  $Z$ to sufficient accuracy to determine it exactly. First note that $Z \leq 2^n$. We want to choose $\epsilon$ so that the $\epsilon$-relative approximations to $Z$ and $Z+1$ can be distinguished for any $0\leq Z \leq 2^n$. This happens if $(1+\epsilon)Z < (1+\epsilon)^{-1}(Z+1)$. Since $\epsilon < 1$, we have that $(1+\epsilon)^2 < 1+3\epsilon$ and $(1+3\epsilon)Z < Z + 1 \implies (1+\epsilon)^2Z < Z+1$. We can therefore choose $\epsilon \leq 2^{n-2}$ to recover $Z$ exactly. It follows that if $\psi_{\textnormal{DNF}}(v)$ was CT, $Z$ could be computed exactly by a polynomial time randomized algorithm ($\textsf{FBPP}$). This problem is however $\#\P$-complete, so this gives a contradiction unless $\#\P$ = $\textsf{FBPP}$.
\end{proof}

We remark that the above argument won't work for $\epsilon = 1/poly(n)$.  The reason is that there is a polynomial time randomized algorithm for approximating the number of DNF satisfying clauses to relative error $\epsilon$ that outputs the solution with high probability in $poly(n, 1/\epsilon)$ time \cite{Karp89}.\footnote{Such algorithm is called \emph{fully-polynomial randomized approximation scheme} (FPRAS)} This is worse approximation than what is assumed in the definition of AR states, which shows that one has to use different argument to separate CT($\epsilon$) from AR($\epsilon$) for $\epsilon$ scaling as $1/poly(n)$. Similar argument to the above leads to an argument that \emph{could} separate SQ and AR in terms of their query complexity:

\begin{lemma}
Given SQ access to $\psi_{\textsf{DNF}}$ from the previous theorem, a single query suffices to approximate $Z$ to machine precision. 
\end{lemma}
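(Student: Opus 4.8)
The plan is to observe that a single $\textnormal{SQ}$ query, evaluated at \emph{any} satisfying assignment, already hands us $1/\sqrt{Z}$, so that $Z$ is recovered simply by inverting and squaring. The only preliminary work is to exhibit one satisfying assignment of the DNF formula, which is trivial.

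Concretely, write $F = F_1 \vee \dots \vee F_m$ with each $F_j$ a conjunction of literals. Pick any clause $F_j$; if it contains no literal together with its negation, set each of its literals to true and all remaining variables arbitrarily, obtaining $v^\star \in \lbrace -1,+1\rbrace^n$ with $F(v^\star) = 1$. If every clause does contain a complementary pair of literals, then $F$ is unsatisfiable, $Z = 0$, and there is nothing to compute. So assume $Z \geq 1$ and $v^\star$ in hand, and query $\textnormal{SQ}(\epsilon)$ on the singleton $\lbrace v^\star \rbrace$. Since $\psi_{\textnormal{DNF}}(v^\star) = 1/\sqrt{Z}$ is real and positive (the polar-form phase is $0$), the oracle returns an $\epsilon$-relative approximation $\tilde{\psi}$ to $1/\sqrt{Z}$.

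Now take the reciprocal and square: $1/\tilde{\psi}$ is an $\epsilon$-relative approximation to $1/\psi_{\textnormal{DNF}}(v^\star) = \sqrt{Z}$ (reciprocation preserves the relative-error parameter, as noted in the proof of Lemma~\ref{Lemma:Ratios}), and then $(1/\tilde{\psi})^2$ is a $3\epsilon$-relative approximation to $Z$ by Lemma~\ref{Lemma:Ratios} (squaring is a product). Because for the $\textnormal{SQ}$ oracle $\epsilon$ scales as $2^{-poly(n)}$, so does $3\epsilon$, and hence a single query yields a machine-precision approximation to $Z$, as claimed. Since $Z$ is moreover a nonnegative integer with $Z \leq 2^n$, taking the polynomial in $\epsilon = 2^{-poly(n)}$ large enough that $3\epsilon < 2^{-(n+2)}$ lets us round $(1/\tilde{\psi})^2$ to the nearest integer and recover $Z$ \emph{exactly} — a $\#\P$-complete quantity — from one $\textnormal{SQ}$ query; this is the point of the lemma in contrast with $\textnormal{AR}$, whose queries on $\psi_{\textnormal{DNF}}$ return only the uninformative value $1$ (all nonzero amplitudes coincide) or a uniform sample.

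There is essentially no obstacle here: the argument is short and the only points needing a moment of care are the degenerate case of an unsatisfiable $F$ (handled above by returning $Z = 0$) and tracking the constant blow-up of the relative-error parameter through the two arithmetic operations, both of which are routine given Lemma~\ref{Lemma:Ratios}.
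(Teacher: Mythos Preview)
Your proof is correct and, if anything, cleaner than the paper's sketch. The paper obtains a satisfying assignment by \emph{sampling} from $|\psi_{\textnormal{DNF}}|^2$ (an $S=\Omega$ query) and then evaluates the amplitude at that sample (an $S=\{v\}$ query), which strictly speaking costs two SQ calls; you instead exploit the DNF structure to write down a satisfying assignment $v^\star$ for free and spend your one query on the amplitude evaluation alone. Both routes reach the same conclusion via $\psi_{\textnormal{DNF}}(v^\star)=1/\sqrt{Z}$, but yours matches the ``single query'' claim literally and also handles the unsatisfiable case explicitly; the paper's version has the minor advantage of not needing any side information about the formula beyond oracle access.
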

\begin{proof} (sketch)
Given $\psi_{\textnormal{DNF}}$ and an SQ access to it, the SQ algorithm draws a sample $v$ from $|\psi_{\textnormal{DNF}}(v)|^2$ and evaluates $\psi_\textnormal{DNF}(v) = 1/\sqrt{Z}$, which gives a machine precision approximation to $Z$. 
\end{proof}

\begin{conjecture}
$Z$ for $\psi_{\textnormal{DNF}}$ can be query-efficiently estimated by AR to no better than $1/poly(n)$ relative error.
\label{Conj:Z}
\end{conjecture}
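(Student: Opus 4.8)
}
The plan is to establish the conjecture as a query lower bound: any algorithm making $\mathrm{poly}(n)$ calls to the AR oracle for $\psi_{\mathrm{DNF}}$ must, on some DNF instance, fail to output a relative approximation to $Z$ better than $1/\mathrm{poly}(n)$, and in particular cannot reach the $2^{-\mathrm{poly}(n)}$ ``machine precision'' that a single SQ query achieves by the preceding Lemma --- so the statement should be read primarily as a separation between AR and SQ. The first step is to reduce AR access to $\psi_{\mathrm{DNF}}$ to two primitives on the satisfying set $S\subseteq\{-1,+1\}^n$: (i) drawing a uniform sample from $S$ (the $S=\Omega$ branch), and (ii) membership testing in $S$. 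For (ii), observe that once the algorithm holds any one satisfying assignment $v_0$ obtained from a sample, the ratio query on the ordered pair $(v,v_0)$ returns $1$ if $v\in S$ and \texttt{DIV} otherwise, since $\psi_{\mathrm{DNF}}(v_0)\neq 0$; hence every ratio query is either a membership test or returns the trivial value $1$, and nothing is lost by passing to the $(\text{sampler},\text{membership})$ oracle.

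The second step is an information-theoretic indistinguishability argument. Fix $q=\mathrm{poly}(n)$ and exhibit two distributions over poly-size DNF formulas whose satisfying-set sizes differ by a constant factor (say $Z$ versus $\tfrac32 Z$, a gap far exceeding any $1/\mathrm{poly}(n)$), such that the transcript distribution of any $q$-query algorithm is $o(1)$-close in total variation across the two. Two facts drive this. If $Z\gg q^2$, then by a birthday argument the $q$ samples are with high probability pairwise distinct, so sampling only ever exposes ``fresh'' elements of $S$ and reveals nothing about $|S|$ beyond the absence of collisions --- the usual obstruction behind support-size estimation lower bounds. And since $Z\ll 2^n$, the membership queries the algorithm makes on points it has not already sampled miss $S$ with high probability under a suitably randomized placement of $S$, while queries on already-sampled points return $1$ and are uninformative. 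Combining these, the algorithm's view in both families collapses to ``$q$ distinct pseudo-random labels together with all-negative membership answers,'' independent of the true size, so no estimator separates $Z$ from $\tfrac32 Z$ and the relative error is $\Omega(1)$ on at least one instance.

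The main obstacle is realizing such hard instances \emph{inside} the class of poly-size DNF formulas while keeping both primitives uninformative. The cleanest hard examples for $(\text{sampler},\text{membership})$ access are uniformly random subsets of a prescribed size, but these have no small DNF, so the efficient AR oracle for $\psi_{\mathrm{DNF}}$ is simply not available for them; conversely the natural small-DNF constructions --- disjoint unions of subcubes --- produce samples that agree on the fixed coordinates of whichever subcube they fall in, which leaks the ``shape'' of $S$ and could in principle be exploited by an adaptive strategy that uses membership queries to probe near sampled points and reconstruct subcube boundaries. A full proof therefore hinges on designing a poly-size DNF family in which $S$ is simultaneously size-tunable across a constant factor and ``mixing'' enough that $\mathrm{poly}(n)$ samples look identical for the two sizes and no adaptive sequence of membership queries localizes $S$ --- for instance taking $S$ to be a DNF-expressible approximation of a random affine subspace, or a random union of many low-dimensional subcubes on randomized coordinate sets, and proving the matching indistinguishability bound. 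A secondary technical point is to handle adaptivity cleanly, via a round-by-round hybrid or coupling argument rather than a static union bound over the query points.
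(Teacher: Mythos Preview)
The paper does not prove this statement: it is explicitly stated as a \emph{conjecture}, and the text immediately following it ends with ``I don't have a proof though.'' The paper's supporting reasoning is limited to the observation that AR access to $\psi_{\mathrm{DNF}}$ collapses to (a) uniform sampling from the satisfying set and (b) membership testing, together with the remark that the best known estimators for $Z$ under such access are FPRAS-type schemes achieving only $1/\mathrm{poly}(n)$ relative error. Your first step --- reducing AR to $(\text{sampler},\text{membership})$ --- is exactly this observation, so on that part you are aligned with the paper. (One small slip: querying $(v,v_0)$ with $v\notin S$ and $v_0\in S$ returns $0$, not \texttt{DIV}; \texttt{DIV} is reserved for a zero denominator. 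This does not affect your argument.)

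Where you go beyond the paper is in sketching an indistinguishability argument between two DNF families with constant-factor-separated $Z$, using a birthday bound for the sampler and a ``queries miss $S$'' bound for membership. This is the natural approach and is more than the paper offers. However, you correctly flag the genuine gap yourself: the standard hard instances for support-size estimation (uniformly random subsets of prescribed size) do not live inside the class of poly-size DNF formulas, while the obvious DNF constructions (unions of subcubes) leak structural information through the samples that an adaptive membership-query strategy could exploit. Designing a DNF family that is simultaneously small, size-tunable, and ``opaque'' to $\mathrm{poly}(n)$ adaptive queries --- and proving the matching TV bound --- is exactly the missing piece, and neither you nor the paper supplies it. So your proposal is a reasonable plan that subsumes the paper's heuristic, but it remains a plan: the conjecture is still open after your write-up, for the reason you identify.
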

The reasoning behind the conjecture is the following: 
Given AR access to $\psi_{\textnormal{DNF}}$, any amplitude ratio query on a pair of nonzero amplitudes evaluates to $1$. Any amplitude ratio on pairs of zero and nonzero amplitudes gives either $0$ or `DIV'. By finding a $v$ with zero amplitude and a single $w$ with a nonzero amplitude, AR can detect if $\psi_\textnormal{DNF}(z) > 0$ for any $z\in \lbrace -1, +1 \rbrace^n$. Possibly the best way to approximate $Z$ with this access \emph{seems to be} variants of importance/nested sampling (see for example \cite{Karp89, Stefankovic06, Dyer91, Huber12}), which at best lead to $poly(n)$-sample algorithm that estimates $Z$ to $1/poly(n)$ relative error.\footnote{FPRAS runs in $poly(n, 1/\epsilon)$, but we assume $\epsilon = 1/poly(n)$ binding. An AR access algorithm that achieves this precision for this problem is a variant of the Karp-Luby-Madras Markov Chain \cite{Karp89}.} It would be extremely surprising if these methods were not asymptotically optimal. I don't have a proof though.

\begin{theorem}
Assuming Conjecture~\ref{Conj:Z}, there is a task that requires just one SQ query, but at least $poly(n)$-many AR queries.
\label{Thm:AR_vs_SQ}
\end{theorem}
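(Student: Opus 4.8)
The plan is to turn the two lemmas that bracket Conjecture~\ref{Conj:Z} into a single task and let the conjecture supply the lower bound. The task is: \emph{given oracle access to the state $\psi_{\textnormal{DNF}}$ attached to an input DNF formula $F$ on $n$ variables (the state constructed in the proof of Theorem~\ref{Thm:CT_AR}), output an $\epsilon$-relative approximation to its normalization $Z=\#\{v:\,F(v)=1\}$}, where I would fix $\epsilon$ to be machine precision; if one wants the ``at least $\mathrm{poly}(n)$ queries'' conclusion in its literal form, it suffices to take any fixed $\epsilon=1/\mathrm{poly}(n)$ with a sufficiently large polynomial. First I would check that this is a legitimate task in both access models. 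It is: $\psi_{\textnormal{DNF}}$ obeys Assumption~\ref{Ass:Key} because all of its nonzero amplitudes equal $1/\sqrt Z\ge 2^{-n/2}$, and the proof of Theorem~\ref{Thm:CT_AR} already exhibits both an AR implementation (the Karp--Luby--Madras sampler answers the $S=\Omega$ query, and plugging $v$ into $F$ answers the ratio query) and an SQ implementation (evaluate $\psi_{\textnormal{DNF}}(v)$ by testing whether $v$ satisfies $F$).

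For the SQ upper bound I would just invoke the lemma immediately preceding Conjecture~\ref{Conj:Z}: one Born-distribution sample returns a $v$ with $\psi_{\textnormal{DNF}}(v)=1/\sqrt Z\ne 0$, one amplitude query returns an $\epsilon$-relative approximation $\tilde{\psi}(v)$ to it, and then $1/\tilde{\psi}(v)^2$ is a $3\epsilon$-relative approximation to $Z$ by Lemma~\ref{Lemma:Ratios}, which is still machine precision when $\epsilon=2^{-\mathrm{poly}(n)}$. That is the ``single SQ query'' of the theorem (a sample followed by one amplitude evaluation; in any case $O(1)$ oracle calls).

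For the AR lower bound I would quote Conjecture~\ref{Conj:Z}: no AR algorithm using only $\mathrm{poly}(n)$ queries estimates $Z$ to relative error smaller than $1/\mathrm{poly}(n)$, and the Karp--Luby-type estimator \cite{Karp89} that does reach relative error $\epsilon$ spends $\mathrm{poly}(n,1/\epsilon)$ queries, which the conjecture posits to be essentially optimal. Hence reaching a fixed $1/\mathrm{poly}(n)$ precision costs $n^{\Omega(1)}$ AR queries, and reaching machine precision costs exponentially many --- in either case strictly, and asymptotically, more than the single SQ query above. Combining the two bounds establishes the task, and hence the theorem.

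The one caveat worth stating --- and the only real obstacle --- is that all of the difficulty has been displaced into Conjecture~\ref{Conj:Z}; granted the conjecture, the theorem is a two-line corollary. The genuinely open part is an \emph{unconditional} AR query lower bound for computing $Z$: every amplitude-ratio query on $\psi_{\textnormal{DNF}}$ returns only $1$, $0$, or the divergence symbol, so ratio queries carry no information about $Z$ beyond the support of $\psi_{\textnormal{DNF}}$, and all knowledge of $|\psi_{\textnormal{DNF}}|^{-2}$ must be extracted from the $\mathrm{poly}(n)$-bounded statistics of Born samples. Making that rigorous --- for instance via an indistinguishability argument showing that few AR queries cannot separate a formula with $Z$ satisfying assignments from one with $(1+\delta)Z$ --- is the step I expect to be hard, and it is exactly what Conjecture~\ref{Conj:Z} leaves unresolved.
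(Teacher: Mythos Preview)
Your proposal is correct and matches the paper's own treatment: the paper gives no separate proof for this theorem, implicitly regarding it as the conjunction of the preceding lemma (one SQ query determines $Z$ to machine precision) and Conjecture~\ref{Conj:Z} (AR needs at least $\mathrm{poly}(n)$ queries for the same precision). Your write-up is in fact more careful than the paper's, as you verify Assumption~\ref{Ass:Key}, invoke Lemma~\ref{Lemma:Ratios} for the inversion, and flag the sample-plus-query vs.\ single-query ambiguity.
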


\paragraph{AR is stronger than sample access}
This is shown by noting that AR access implies PCOND access and then referencing the  known results from conditional distribution testing that separate PCOND and sample access.

\begin{lemma}
AR simulates \textnormal{PCOND}.
\end{lemma}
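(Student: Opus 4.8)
The plan is to exhibit a direct simulation of a PCOND oracle for the Born distribution $p(v) = |\psi(v)|^2$ using only the two capabilities AR provides: sampling from $p$ and returning (approximate) amplitude ratios $\psi(i)/\psi(j)$. This is essentially the construction already used in the proof of Theorem~\ref{Thm:PCOND_for_NQS}, now carried out at the level of the abstract access model rather than for a concrete NQS. So the structure of the argument mirrors that proof: handle the two types of PCOND input separately.

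First I would dispatch on the input set $S$ in Definition~\ref{Def:PCOND}. If $S = \Omega$, a PCOND query must return $i$ with probability $p(i) = p(i)/p(\Omega)$, which is exactly a Born-distribution sample — so forward the AR sampling query on $\Omega$ and output its result. If $S = \lbrace i, j\rbrace$, PCOND must return $i$ with probability $p(i)/(p(i)+p(j))$. Query AR on the ordered pair $(j,i)$ to obtain (an approximation to) $t = \psi(j)/\psi(i)$; then
\begin{align}
\frac{p(i)}{p(i)+p(j)} = \frac{|\psi(i)|^2}{|\psi(i)|^2 + |\psi(j)|^2} = \frac{1}{1 + |\psi(j)/\psi(i)|^2} = \frac{1}{1+|t|^2},
\end{align}
so compute $r = (1+|t|^2)^{-1}$ and return $i$ with probability $r$ and $j$ with probability $1-r$ via a biased coin flip. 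The degenerate cases are handled by the AR conventions: if the ratio query returns `DIV' then $\psi(i) = 0$ while $\psi(j) \neq 0$, so $p(i) = 0$ and we return $j$ with certainty (equivalently $r = 0$); if it returns the conventional value $1$ because $\psi(i) = \psi(j) = 0$, then $p(i) = p(j) = 0$ and we return one of $i,j$ uniformly at random, matching the $0/0$ convention implicit in PCOND on an all-zero pair.

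The remaining point is that AR$(\epsilon)$ returns only an $\epsilon$-relative approximation $\tilde t$ to $t$, so the coin is flipped with bias $\tilde r = (1+|\tilde t|^2)^{-1}$ rather than the exact $r$. Since $|t| \geq 2^{-\mathrm{poly}(n)}$ or $t \in \lbrace 0, \text{`DIV'}\rbrace$ under Assumption~\ref{Ass:Key}, and $\epsilon = 2^{-\mathrm{poly}(n)}$ for AR, the resulting bias $\tilde r$ differs from $r$ by at most $2^{-\mathrm{poly}(n)}$; by Lemma~\ref{Lemma:Ratios} the quantity $1+|\tilde t|^2$ is a good relative approximation to $1+|t|^2$ (note $1$ is exact and the error only grows in the $|t|^2$ term, which can only help), so the total variation distance between the simulated output distribution on $\lbrace i,j\rbrace$ and the true PCOND output is exponentially small. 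As in Theorem~\ref{Thm:PCOND_for_NQS}, over any family of algorithms making $\mathrm{poly}(n)$ PCOND queries this discrepancy is undetectable by a union bound, so AR faithfully simulates PCOND for all such algorithms.

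I do not expect a genuine obstacle here — the simulation is immediate and the only subtlety is the standard bookkeeping of propagating the relative-approximation error through $t \mapsto (1+|t|^2)^{-1}$ and arguing it stays exponentially small; this is exactly the argument already deployed for NQS, and the conventions in Definition~\ref{Def:AR_exact}/\ref{Def:AR_REL} were set up precisely so that the edge cases line up with PCOND's behavior on zero-probability pairs.
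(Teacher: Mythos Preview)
Your proposal is correct and takes essentially the same approach as the paper: the paper's proof is just a pointer back to the construction in Eq.~\ref{Eq:Flip} (compute $r=(1+|\psi(j)/\psi(i)|^2)^{-1}$ from a ratio query and flip an $r$-biased coin, with the sampling query handling $S=\Omega$), which is exactly what you do. Your treatment of the `DIV'/both-zero edge cases and the relative-error bookkeeping is more explicit than the paper's, but not different in substance.
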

\begin{proof}
See discussion above Eq.~\ref{Eq:Flip}.
\end{proof}
 AR access is stronger than PCOND access and PCOND is strictly stronger than sample access (Tab~\ref{tab:Cannone_results}). It is worth noting that the algorithms of \cite{cannone_pc} work even if the AR ratios can be approximated only to $1/poly(n)$ additive error. It may be therefore interesting to study weaker variants of AR. 
 
 \paragraph{AR is stronger than PCOND}
 
 \begin{lemma}
 AR is stronger than \textnormal{PCOND}.
 \label{Lemma:AR_vs_PCOND}
 \end{lemma}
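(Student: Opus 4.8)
The plan is to prove Lemma~\ref{Lemma:AR_vs_PCOND}, namely that AR access is \emph{strictly} stronger than PCOND access. We already know from the preceding lemma that AR simulates PCOND, so half the statement is in hand: it remains to exhibit a separation, i.e.\ a family of wavefunctions on which some task solvable with few AR queries provably requires many PCOND queries. The natural route is to reuse the $\psi_{\textnormal{DNF}}$ construction (or a close cousin of it), since it has already been shown to be an exact-AR state with an efficient sampler, and to compare the two oracles on the task of estimating the normalization $Z$ — or, more cleanly, on a related promise/distinguishing problem whose answer is determined by $Z$.

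First I would fix the task precisely. A convenient one: distinguish a uniform superposition over a set of size $Z$ from a uniform superposition over a set of size $Z' = (1+\delta)Z$ for a constant (or $1/\mathrm{poly}(n)$) gap $\delta$, where the supports are given implicitly via a DNF-membership predicate. Here a single SQ query already decides it (Lemma above: sample $v$, read off $\psi_{\mathrm{DNF}}(v) = 1/\sqrt Z$), and by Theorem~\ref{Thm:AR_vs_SQ} / Conjecture~\ref{Conj:Z} this is plausibly hard for AR; but for a clean \emph{unconditional} AR-vs-PCOND separation I instead want a task that AR does cheaply and PCOND cannot. So I would flip the roles: pick a task where AR's ability to return a nonzero \emph{value} (not just a biased coin) is the crucial extra resource. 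The cleanest such task is estimating a \emph{global phase-free ratio of amplitude magnitudes across the support}, or — simplest of all — estimating the value $|\psi(i)/\psi(j)|$ itself when $\psi$ is engineered so that this value encodes, say, a count, while PCOND only ever sees the renormalized coin bias $r = (1+|\psi(j)/\psi(i)|^2)^{-1}$ for the pair $\{i,j\}$ and must estimate it statistically. Concretely: take a wavefunction with two "anchor" configurations whose amplitude ratio is an exponentially-small or exponentially-large quantity $\rho$; AR returns $\rho$ to machine precision in \emph{one} query, whereas PCOND on $\{i,j\}$ returns a coin that is $\rho$-away from deterministic and so needs $\Omega(1/\rho^2)$ or $\Omega(1/\delta^2)$ draws to certify, and on $\Omega$ just samples the Born distribution, which (by choosing the bulk of the mass away from $i,j$, or making $\rho$ tiny) is uninformative about $\rho$. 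This is exactly the gap that "procedure compare" exploits: compare needs $\mathrm{poly}$ queries precisely because it is estimating the ratio from coins, while AR hands over the ratio for free.

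The steps, in order: (1) state the task as a two-oracle query-complexity problem (a promise problem distinguishing $\rho$ from $(1+\delta)\rho$, or computing $\rho$ to relative error $\epsilon$); (2) upper-bound the AR query complexity by $1$ (exact AR) or $O(1)$ (AR($\epsilon$)), invoking Definition~\ref{Def:AR_exact}/\ref{Def:AR_REL}; (3) prove a PCOND lower bound of $\Omega(1/\delta^2)$ (or $\omega(1)$, which already suffices for "strictly stronger") via a standard indistinguishability/coupling argument: under PCOND the only two behaviors are (a) a sample from a Born distribution that is, by construction, identical in the two instances up to TV distance $o(\delta)$, and (b) a coin on $S=\{i,j\}$ whose biases in the two instances differ by $O(\delta)$, so any algorithm making $o(1/\delta^2)$ queries cannot distinguish — formalized by bounding the TV distance between the two induced query-answer transcripts, e.g.\ a hybrid argument over the at most $T$ queries with a per-step bound of $O(\delta)$ on the statistical distance of a single answer; (4) combine to conclude AR $\not\le$ PCOND, which with the preceding lemma gives strict separation.

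The main obstacle I expect is step (3): making the PCOND lower bound genuinely \emph{unconditional} and clean. The delicate point is that PCOND is allowed to query $S=\Omega$ and get a full Born sample, so I must engineer the two instances so that their Born distributions are close in TV \emph{and} the pairwise-conditional coins on every queried pair are close — not just on the "anchor" pair — otherwise an adversary could probe a revealing pair. The safe design is to have the two instances differ only in a single amplitude's magnitude by a $(1+\delta)$ factor at a configuration carrying an exponentially small share of the Born mass (so the Born distributions are $2^{-\Omega(n)}$-close in TV, killing behavior (a) entirely), while every PCOND coin involving that configuration is bounded away from $0$ and $1$ and shifts by only $O(\delta)$ (killing behavior (b) via the hybrid bound). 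Verifying both closeness conditions simultaneously, and checking that Assumption~\ref{Ass:Key} is respected (amplitudes either $0$ or $\geq 2^{-\mathrm{poly}(n)}$), is the part that needs care; everything else is routine. If a fully unconditional bound proves fussy, a fallback is to state the separation relative to the already-cited PCOND-vs-sample hierarchy plus the observation that compare is provably $\Omega(1)$-query for ratio estimation, which suffices for "strictly stronger" in the query-complexity sense used throughout this section.
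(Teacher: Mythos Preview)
Your core idea is correct and matches the paper's: exploit that AR hands over the amplitude ratio in one query while PCOND can only estimate it by flipping the $\{i,j\}$-coin, so a ratio that is exponentially extreme forces exponentially many PCOND queries. The paper's proof is a two-line sketch of exactly this: the task is to decide whether $|\psi(i)/\psi(j)|^2 \leq 2^{-n}$; AR answers in one query, and PCOND is stuck with a coin whose bias is within $2^{-n}$ of $0$ or $1$, so standard concentration bounds give an exponential lower bound.

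Where you differ is mostly in packaging and caution. You frame the separation as distinguishing $\rho$ from $(1+\delta)\rho$ and then worry about designing two full instances whose Born distributions are TV-close and whose \emph{all} pairwise coins are close, handled via a hybrid argument. That is a perfectly valid route and would yield a more carefully stated lower bound, but it is heavier than what the paper needs or does. In particular, your concern about the $S=\Omega$ branch of PCOND is real in general but evaporates in the simplest instantiation: take a two-point wavefunction on $\{i,j\}$ with $|\psi(i)/\psi(j)|^2 = 2^{-n}$; then the $S=\Omega$ sample \emph{is} the same coin as the $S=\{i,j\}$ query, so there is nothing extra to control and no need for ``bulk mass'' or a hybrid over many pairs. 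The detour through $\psi_{\textnormal{DNF}}$ at the start of your plan is also unnecessary here; that construction was aimed at the AR-vs-SQ direction and plays no role in this lemma.

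In short: right mechanism, but you can simplify dramatically. Pose the threshold task on a two-point support, note the single-query AR upper bound, and invoke a one-line coin-bias concentration lower bound for PCOND. Your more elaborate scheme would also work, but it buys rigor the paper does not ask for.
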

 \begin{proof} (sketch)
 AR can decide if $|\psi(i)/\psi(j)|^2 \leq 2^{-n}$ in a single query, but the same task requires exponentially many PCOND queries. This is because the PCOND model can estimate the ratio only by sampling, which is limited by the usual concentration bounds. These imply a lower bound of exponentially many queries.
  \end{proof}
  
 One can object that the above comparison is rather unfair, because AR computes the amplitude ratio to a high degree of precision with a single query, while PCOND can only do so query-efficiently to $1/poly(n)$ additive error. We strengthen the above lemma to separation from a version of high-precision PCOND that can query for ratios of Born distribution as follows:
 \begin{lemma}
    AR is stronger than a version of \textnormal{PCOND} that can query ratios of Born distributions to machine precision.
    \label{Lemma:PCOND_vs_AR}
 \end{lemma}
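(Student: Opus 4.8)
The plan is to prove both directions of ``stronger than'': that AR can reproduce every answer of this enhanced PCOND oracle, and that there is a task solved by a single AR query that no number of enhanced-PCOND queries can solve. The simulation direction is short: an AR$(\epsilon)$ query on an ordered pair $(i,j)$ returns an $\epsilon$-relative approximation to $\psi(i)/\psi(j) = e^{i\alpha}R$ with $R = |\psi(i)/\psi(j)|$ (Def.~\ref{Def:AR_REL}); squaring the stored magnitude yields an $O(\epsilon)$-relative approximation to $R^2 = |\psi(i)|^2/|\psi(j)|^2 = p(i)/p(j)$ by Lemma~\ref{Lemma:Ratios}, and the `DIV' and all-zero conventions coincide with those of the PCOND variant. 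Since the $S=\Omega$ branch of both oracles is literally the same Born sampler, AR answers every enhanced-PCOND query to machine precision, and the reduction is moreover efficient whenever the state is an AR state.

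For the separating task I would use two wavefunctions with identical Born statistics but different relative phases. On $\Omega = \lbrace-1,+1\rbrace^n$ fix $u = (+1,\dots,+1)$ and let $\psi_0(v) = 2^{-n/2}$ for all $v$, while $\psi_1$ agrees with $\psi_0$ except $\psi_1(u) = -2^{-n/2}$. Both satisfy Assumption~\ref{Ass:Key}, and crucially $|\psi_0(v)|^2 = |\psi_1(v)|^2 = 2^{-n}$ for every $v$: the two states induce the \emph{same} uniform Born distribution and all their Born-probability ratios equal $1$. Consequently the enhanced PCOND oracle produces the \emph{same} distribution of answers on $\psi_0$ and on $\psi_1$ for every admissible query --- an $S=\Omega$ query returns a uniform sample either way, and an $S=(i,j)$ query returns (a machine-precision approximation of) $1$ either way. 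By the standard indistinguishability argument, the transcript of any algorithm that only makes enhanced-PCOND queries has a distribution independent of which of $\psi_0, \psi_1$ is the hidden state; hence no such algorithm, with any number of queries, decides which state it accesses with probability better than $1/2$.

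AR, on the other hand, cracks this open with one query: asking for the ordered pair $(u,w)$ with any $w \neq u$ returns $\psi(u)/\psi(w)$, equal to $+1$ if the hidden state is $\psi_0$ and $-1$ if it is $\psi_1$ --- distinguishable even under coarse additive phase precision, since the phases $0$ and $\pi$ are far apart. So the task ``decide whether the hidden state is $\psi_0$ or $\psi_1$'' is settled with certainty by a single AR query yet is information-theoretically impossible for the enhanced PCOND oracle, which together with the simulation above gives the lemma. I expect the only mild bookkeeping to be the precision argument in the simulation direction; the separation itself is clean, and --- this is the point that answers the ``unfairness'' worry attached to Lemma~\ref{Lemma:AR_vs_PCOND} --- the gap here is not about precision or query budget at all, but about information: the enhanced PCOND oracle carries no data about relative phases, and no amount of magnitude data can recover a sign.
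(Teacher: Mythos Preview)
Your proof is correct. The simulation direction is essentially the paper's (square the AR magnitude, invoke Lemma~\ref{Lemma:Ratios}), so nothing to add there.

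For the separation you take a genuinely different route from the paper. The paper picks two single-qubit states $\psi = \tfrac{1}{\sqrt{2}}(1,-1)^\intercal$ and $\phi = \tfrac{1}{\sqrt{2}}(-1,1)^\intercal$ and argues that AR can evaluate the overlap $|\braket{\psi|\phi}|^2$ via the fidelity-estimator identity from Sec.~\ref{Sec:AR_fidelity}, while the enhanced PCOND oracle cannot, since every Born-ratio query on either state returns~$1$. You instead pose a single-state distinguishing problem: two wavefunctions with identical Born distributions but one amplitude sign-flipped, indistinguishable to any number of enhanced-PCOND queries yet separated by one AR query. Both arguments rest on the same observation --- the Born-ratio oracle is blind to relative phases --- but your framing is cleaner: it is a textbook information-theoretic indistinguishability argument on one hidden state, it avoids the two-state overlap machinery, and it makes explicit the point you flag at the end, that the gap is about information content rather than precision. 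The paper's version has the minor advantage of tying the separation to a concrete simulation primitive (fidelity estimation) already developed elsewhere in the paper, which foreshadows Sec.~\ref{Sec:AR_fidelity}; yours has the advantage of being self-contained and sharper in query count (one query versus eight).
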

 \begin{proof}
 First, this version of PCOND can be simulated by AR by squaring all amplitude ratio queries. 
 Let $\psi = \frac{1}{\sqrt{2}}\left( \begin{array}{c} 1 \\ -1 \end{array} \right)$ and $\phi = \frac{1}{\sqrt{2}}\left( \begin{array}{c} 1 \\ 1 \end{array} \right)$ and note that: 
 \begin{align}
     |\braket{\psi|\phi}|^2 &= \sum_{y,x} \psi^*(x)\psi(y) \phi^*(y) \phi(x) = \sum_{x,y} |\psi(x)|^2 |\phi(y)|^2 \frac{\psi(y)\phi(x)}{\psi(x)\phi(y)} \\
     &= \frac{1}{4} \left[\frac{\psi(0)}{\phi(0)}\frac{\phi(0)}{\psi(0)} + \frac{\psi(1)}{\phi(1)}\frac{\phi(0)}{\psi(0)}+ \frac{\psi(0)\phi(1)}{\phi(0)\psi(1)}+\frac{\psi(1)\phi(1)}{\phi(1)\psi(1)} \right] = 0,
 \end{align}
 which can be computed with $2$ AR queries. The states have the same, uniform, Born distributions, so all PCOND ratio queries evaluate to $1$. The PCOND therefore cannot evaluate the overlap of $\psi$ and $\phi$. The PCOND access model therefore cannot evaluate the overlap of $\psi$ and $\phi$.
 \end{proof}

 \paragraph{Summary}
Results of Sec.~\ref{Sec:ARvsSQ} can be summarized as :
\begin{align}
    \textnormal{SQ} \overset{*}{\supset} \textnormal{AR} \supset \textnormal{PCOND} \supset \textnormal{SAMP},
\end{align}
since PCOND access is separated from sampling access to a Born distribution of a wavefunction by Tab~\ref{tab:Cannone_results}, AR is separated from PCOND access by Lemma~\ref{Lemma:PCOND_vs_AR}, and SQ is \emph{almost surely} (that's why the star) separated from AR by Theorems~\ref{Thm:CT_AR} and~\ref{Thm:AR_vs_SQ}. By separation, we mean that there exists at least one problem that can be query efficiently solved by one of the classes, but not by the other.

\subsection{AR and probabilistic simulation}
We now show that AR states retain many simulation capabilities of CT states. Most of the results follow from standard algorithms used with NQS that implicitly used the AR access. We improve some of them, for example by robustification of the AR fidelity estimator, to give the closest possible analogues to the previous results for CT states \cite{vanDenNest11}. 

\label{Sec:AR_fidelity}
\paragraph{AR fidelity estimator:}
Given two amplitude ratio (AR) states $\psi$ and $\phi$, there is a randomized algorithm that approximates their fidelity $|\braket{\psi | \phi}|^2$ in polynomial time to inverse-polynomial precision. Medvidovic and Carleo use the following estimator for NQS in Ref.~\cite{Medvidovic21}: 
\begin{align}
    |\braket{\phi|\psi}|^2 &= \sum_{x,y} \phi^*(x) \psi(x)\psi^*(y)\phi(y)  \\
    &= 
    \frac{Z_\phi}{Z_\psi} \frac{Z_\psi}{Z_\phi}\sum_x |\phi(x)|^2 \frac{f_\psi(x)}{f_\phi(x)} \sum_y |\psi(y)|^2 \frac{f_\phi(y)}{f_\psi(y)} \\
    &= \sum_x |\phi(x)|^2 \frac{f_\psi(x)}{f_\phi(x)} \sum_y |\psi(y)|^2 \frac{f_\phi(y)}{f_\psi(y)}.
\end{align}
Every term in the summation uses two AR ratio queries, which can be seen from:
\begin{align}
     | \braket{\phi|\psi}|^2 &= \sum_{x,y} |\phi(x)|^2 |\psi(y)|^2 \left[\frac{f_\psi(x)}{f_\psi(y)}\right] \left[ \frac{f_\phi(y)}{f_\phi(x)}\right].
\end{align} This can be operationally understood as follows: sample $(x,y) \sim |\phi(x)|^2 |\psi(y)|^2$ (which is a valid product distribution) and compute the product of AR ratios on $\psi$ for $(x,y)$ and $\phi$ for $(y,x)$. 

We now show that the estimator has finite variance and give a robust version of it with fast concentration around the mean. Set:
\begin{align}
\label{Eq:overlap_rv}
    G(x,y) = \left[\frac{\psi(x)\phi(y)}{\psi(y) \phi(x)}\right] = \left[\frac{f_\psi(x)f_\phi(y)}{f_\psi(y)f_\phi(x)}\right] ,
\end{align}
and notice that:
\begin{align}
    \mathbb{E}[|G|^2] &=\sum_{x,y} |\phi(x)\psi(y)|^2 |G(x,y)|^2 = 1.
\end{align}
From this, the variance becomes:
\begin{equation}
\begin{aligned}
    \sigma^2[G] &= \mathbb{E}[|G|^2] - \mathbb{E}[|G|]^2 \leq 1.
\end{aligned}
\end{equation}

While the main utility of $G(x,y)$ is a simplification of the variance analysis, it could also allow for additional cancellation between $f_\psi(x)$ and $f_\phi(y)$ that could not be exploited in the product of means estimator of Ref.~\cite{Medvidovic21}. 

\paragraph{Robust AR Fidelity Estimator:}Despite the fact that the estimator has a finite variance, 
the random variable $G(x,y)$ is unbounded because it contains a ratio of wavefunctions evaluated at two distinct points and may explode if either $f_\psi(y)$ or $f_\phi(x)$ are close to zero. If we are unlucky enough to obtain such outlier in our empirical estimation, it can significantly skew the statistics. It is therefore desirable to use an estimator that is less affected by outliers -- such estimators are often called \emph{robust}. We show how to make the above estimator robust using the median of means amplification. 
\begin{theorem}[Median of means estimator]
\label{Thm:MoM}
Let $k, \ell$ be two integers. Define the empirical mean $\bar{G}$ as: $\bar{G} = \sum_i G_i / k$. Compute $\ell$ such empirical means and use the median as the estimator. Then:
{\small \begin{align}
    \textnormal{Pr} \left( \left| \textnormal{med}[\lbrace\bar{G}\rbrace_{i=1}^\ell] - \mathbb{E}[G] \right| > \epsilon  \right) \leq \exp \left( -2\ell \left[ \frac{1}{2} - \frac{\sigma^2}{k\epsilon^2} \right]^2 \right) 
    \label{Eq:Mom}
\end{align}}

\end{theorem}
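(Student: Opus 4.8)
The plan is to prove the median-of-means concentration bound (Theorem~\ref{Thm:MoM}) in two stages: first control a single empirical mean $\bar G$ via Chebyshev, then boost the confidence by taking the median of $\ell$ independent copies and applying a Hoeffding bound to the count of ``bad'' means. Throughout I will work with the real and imaginary parts separately (or just treat $G$ as the real-valued quantity whose mean we want, since the estimator for $|\braket{\phi|\psi}|^2$ ultimately uses a product of two such variables; for the statement as written I take $G$ real without loss of generality, or prove the bound coordinatewise and absorb a factor of $2$).

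First I would fix one block of $k$ i.i.d. samples $G_1,\dots,G_k$, each distributed as $G(x,y)$ with $\mathbb{E}[G]=:\mu$ and $\sigma^2[G]=\sigma^2\le 1$ (the variance bound was established just above the theorem). Then $\bar G = \tfrac1k\sum_i G_i$ has $\mathbb{E}[\bar G]=\mu$ and $\sigma^2[\bar G]=\sigma^2/k$, so Chebyshev's inequality gives
\begin{equation}
  \Pr\!\left(\,\bigl|\bar G-\mu\bigr|>\epsilon\,\right)\;\le\;\frac{\sigma^2}{k\epsilon^2}\;=:\;q.
\end{equation}
For this to be a useful ``coin bias'' I need $q<1/2$, i.e. $k>2\sigma^2/\epsilon^2$; the bound in the theorem is vacuous otherwise (the bracket $\tfrac12-\tfrac{\sigma^2}{k\epsilon^2}$ is then non-positive), so I may assume $q<1/2$.

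Next I would take $\ell$ independent blocks, producing empirical means $\bar G^{(1)},\dots,\bar G^{(\ell)}$, and call block $r$ \emph{bad} if $|\bar G^{(r)}-\mu|>\epsilon$. Let $Y=\sum_{r=1}^\ell \mathbf 1[\text{block }r\text{ bad}]$; the indicators are i.i.d. Bernoulli with success probability $p_r\le q$. The key observation is that if the median of the $\bar G^{(r)}$ deviates from $\mu$ by more than $\epsilon$, then at least half of the blocks must be bad, i.e. $Y\ge \ell/2$. Hence
\begin{equation}
  \Pr\!\left(\,\bigl|\mathrm{med}[\{\bar G^{(r)}\}]-\mu\bigr|>\epsilon\,\right)\;\le\;\Pr\!\left(Y\ge \tfrac{\ell}{2}\right)\;=\;\Pr\!\left(Y-\mathbb{E}[Y]\ge \tfrac{\ell}{2}-\mathbb{E}[Y]\right).
\end{equation}
Since $\mathbb{E}[Y]\le q\ell$ and $q<1/2$, we have $\tfrac{\ell}{2}-\mathbb{E}[Y]\ge (\tfrac12-q)\ell>0$, so Hoeffding's inequality for sums of $[0,1]$-bounded independent variables yields
\begin{equation}
  \Pr\!\left(Y-\mathbb{E}[Y]\ge \bigl(\tfrac12-q\bigr)\ell\right)\;\le\;\exp\!\left(-2\ell\bigl(\tfrac12-q\bigr)^2\right)\;=\;\exp\!\left(-2\ell\left[\tfrac12-\frac{\sigma^2}{k\epsilon^2}\right]^2\right),
\end{equation}
which is exactly Eq.~\ref{Eq:Mom}. (A minor point: Hoeffding is usually stated for fixed success probability $q$; here $p_r\le q$, and one checks that replacing $p_r$ by the larger value $q$ only increases the tail probability, so the bound still holds — this is the standard stochastic-domination argument.)

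I do not expect a serious obstacle; the proof is a textbook median-of-means argument. The only things requiring a little care are: (i) making explicit the assumption $k>2\sigma^2/\epsilon^2$ so the stated bound is non-trivial, (ii) the domination step that lets one use the worst-case bias $q$ in Hoeffding, and (iii) reconciling the real-valued Chebyshev/Hoeffding machinery with the fact that $G$ is a priori complex — handled either by noting that for the fidelity application the relevant quantity is real, or by applying the bound to $\mathrm{Re}\,G$ and $\mathrm{Im}\,G$ separately and taking a union bound with $\epsilon\mapsto\epsilon/\sqrt2$. None of these changes the structure of the argument.
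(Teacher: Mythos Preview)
Your proposal is correct and matches the paper's proof essentially step for step: Chebyshev on a single block to get $q=\sigma^2/(k\epsilon^2)$, then the observation that a bad median forces a majority of bad blocks, then a Hoeffding/binomial tail bound with the monotonicity (stochastic-domination) step to replace the true bad-block probabilities by the upper bound $q$. The paper phrases the last step as ``a tail-bound on the binomial distribution'' rather than invoking Hoeffding by name, and does not spell out the $q<1/2$ non-triviality condition or the real-vs-complex issue, but the argument is the same.
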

\begin{proof}
We have $\sigma^2(\bar{G}) = \sigma^2[{G}]/k$. By Chebyshev inequality:
\begin{align}
    p_* =: \text{Pr} \left( \left| \bar{G} - \mathbb{E} [G] \right| \geq \epsilon \right) \leq \frac{\sigma^2}{k \epsilon^2} := p.
    \label{Eq:Chebyshev}
\end{align}
The median of means condition in Eq.~\ref{Eq:Mom} is violated if the majority of empirical means violate the Chebyshev condition in Eq.~\ref{Eq:Chebyshev}. The probability of this happening is at most:
\begin{align}
     \textnormal{Pr} \left( \left| \textnormal{med}[\lbrace\bar{G}\rbrace_{i=1}^\ell] - \mathbb{E}[G] \right| > \epsilon  \right) &= \sum_{k = \lceil \frac{\ell}{2} \rceil}^{\ell} {\ell \choose k} p_*^k (1-p_*)^{\ell - k} \\ &\leq \sum_{k=\lceil \frac{\ell}{2} \rceil }^\ell {\ell \choose k} p^k (1-p)^{\ell - k},
\end{align}
where the inequality follows by monotonicity.
We can bound this by: 
\begin{align}
    \sum_{k=\lceil \frac{\ell}{2} \rceil }^\ell {\ell \choose k} p^k (1-p)^{\ell - k} &\leq \exp\left(-2\ell \left(p-\frac{1}{2} \right)^2\right),
\end{align}
where we used a tail-bound on the binomial distribution.  
\end{proof}

This technique is commonly used in computer science (see Eg.~\cite{Jerrum86} or a recent review~\cite{Lerasle19}) and was previously used with SQ access/CT states by Tang \cite{Ewin18}, where it was presented as a standard technique. An alternative way to make the SQ estimators robust was used by Van den Nest in Ref.~\cite{vanDenNest11}, but it does not work for AR. A corollary of Theorem~\ref{Thm:MoM} is a polynomial-time robust estimator of the overlap:
\begin{corollary}[Robust AR fidelity estimator]
\label{Cor:robust_fidelity}
 There is an algorithm that estimates $|\braket{\psi | \phi}|^2$ to $\epsilon$-additive error with probability $1-e^{-n}$ using $64n/\epsilon^2$ AR queries.
\end{corollary}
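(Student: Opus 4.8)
The plan is to obtain this purely as a parameter-tuning corollary of Theorem~\ref{Thm:MoM} applied to the overlap estimator $G(x,y)$ of Eq.~\ref{Eq:overlap_rv}. Recall that when $(x,y)$ is drawn from the product distribution $|\phi(x)|^2|\psi(y)|^2$ we have $\mathbb{E}[G] = |\braket{\psi|\phi}|^2$, and from the variance computation immediately preceding Theorem~\ref{Thm:MoM} we have $\sigma^2[G] \leq 1$. I would first spell out that a single fresh evaluation of $G$ costs a constant number of AR queries: one $S=\Omega$ query to the oracle for $\phi$ to draw $x$, one $S=\Omega$ query to the oracle for $\psi$ to draw $y$, one $S=(x,y)$ query to the oracle for $\psi$ returning $\psi(x)/\psi(y)$, and one $S=(y,x)$ query to the oracle for $\phi$ returning $\phi(y)/\phi(x)$; the product of the last two is $G(x,y)$. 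As the excerpt already notes, every term in the sum uses two AR ratio queries, so $k\ell$ evaluations of $G$ cost $2k\ell$ AR queries.

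Next I would fix the constants. Substituting $\sigma^2 = 1$ into Eq.~\ref{Eq:Mom}, choose $k = 4/\epsilon^2$ so that $\sigma^2/(k\epsilon^2) = 1/4$ and the bracket $\bigl[\tfrac12 - \sigma^2/(k\epsilon^2)\bigr]^2 = 1/16$. The bound in Eq.~\ref{Eq:Mom} then reads $\exp(-\ell/8)$, which is at most $e^{-n}$ as soon as $\ell \geq 8n$. Taking $\ell = 8n$ gives $k\ell = 32n/\epsilon^2$ evaluations of $G$, hence $64n/\epsilon^2$ AR queries, and Theorem~\ref{Thm:MoM} guarantees that the median of the $\ell$ empirical means of $G$ lies within $\epsilon$ of $|\braket{\psi|\phi}|^2$ except with probability at most $e^{-n}$.

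Two minor points need a line of care in the write-up. First, $G(x,y)$ is complex-valued in general whereas $\mathbb{E}[G] = |\braket{\psi|\phi}|^2$ is real; one handles this by running the median-of-means bound of Theorem~\ref{Thm:MoM} on $\mathrm{Re}(G)$ and $\mathrm{Im}(G)$ separately (each has variance at most $1$, and $\mathbb{E}[\mathrm{Im}(G)] = 0$) and returning the real part, which only changes the constant inside the already stated bound; alternatively one simply observes that the Chebyshev-plus-binomial argument of Theorem~\ref{Thm:MoM} applies verbatim to $|\,\overline{G} - \mathbb{E}[G]\,|$ in the complex plane. Second, the oracle returns only $\epsilon'$-relative approximations with $\epsilon' = 2^{-\mathrm{poly}(n)}$; by Lemma~\ref{Lemma:Ratios} the error induced in each $G(x,y)$ is again $2^{-\mathrm{poly}(n)}$, so across the $\mathrm{poly}(n,1/\epsilon)$ queries used it stays far below $\epsilon$ and can be neglected, exactly as elsewhere in the paper.

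I do not anticipate a genuine obstacle: the content is entirely contained in Theorem~\ref{Thm:MoM}, and the proof is a matter of choosing $k=4/\epsilon^2$, $\ell = 8n$ and doing the query bookkeeping. The only spots requiring attention are the precise query-counting convention (two ratio queries per evaluation of $G$, yielding the factor $64$) and the routine reduction of the complex-valued estimator to its real and imaginary parts, neither of which affects the claimed $64n/\epsilon^2$ bound.
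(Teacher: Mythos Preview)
Your proposal is correct and follows essentially the same approach as the paper: both set $k = 4/\epsilon^2$ and $\ell = 8n$, invoke the median-of-means bound of Theorem~\ref{Thm:MoM} with $\sigma^2 \leq 1$, and count $2k\ell = 64n/\epsilon^2$ AR queries. Your additional remarks on handling the complex-valued estimator and the $2^{-\mathrm{poly}(n)}$ oracle precision are not in the paper's proof but are reasonable clarifications that do not change the argument.
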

\begin{proof}
Use the median of means estimator in Theorem~\ref{Thm:MoM}. One evaluation of the random variable $G(x,y)$ in Eq.~\ref{Eq:overlap_rv} costs two AR queries. Use the median of $\ell$ empirical means of $G(x,y)$ (each over $k$ evaluations of the random variable) as the estimator of the overlap. Setting $k = 4/\epsilon^2$ and $\ell = 8n$ in Theorem~\ref{Thm:MoM} gives: 
\begin{align}
    \exp\left(-16n\left( \frac{1}{2} - \frac{\sigma^2}{4}\right)^2 \right) < \exp(-n),
\end{align}
which follows from: 
\begin{align}
    \left(\frac{1}{2} - \frac{\sigma^2}{4}\right)^2 \in \left[\frac{1}{16}, \frac{1}{4}\right].
\end{align}
The overall number of AR queries that achieves this is at most $2\ell k = 64n/\epsilon^2$.
\end{proof}

We briefly compare this estimator to the CT estimator of Ref.~\cite{vanDenNest11}. The algorithms achieve the same goal and their asymptotic query complexity in $\epsilon$ and $n$ are the same. The above fidelity estimator however does not require computation of the amplitude, but only amplitude ratios, which is a computationally easier problem as argued in Theorem~\ref{Thm:CT_AR}.

\paragraph{Estimating Sparse Observables:}
\label{Sec:Estimating_local_observables}
Given AR access, there is an algorithm for estimating expectation value of a (hermitian) observable $O$ expanded in the same basis as the wavefunction. This algorithm is well-known in computational physics as local observable estimation:
\begin{align}
    \braket{\psi|O|\psi} = \sum_k \sum_j \psi(j) \psi^*(j) O_{jk} \frac{\psi(k)}{\psi(j)} = \sum_j |\psi(j)|^2 \left[\sum_k O_{jk} \frac{f_\psi(k)}{f_\psi(j)} \right].
\end{align}
This can be interpreted as sampling from the Born distribution $|\psi(j)|^2$ and querying the $f_\psi(k)/f_\psi(j)$ AR ratios for every $k$ that appears in:
\begin{align}X := \sum_k O_{jk} \left[{f_\psi(k)}/f_\psi(j)\right].\end{align} This means that if $O$ has at most $poly(n)$ non-zero entries in each row, we can compute this estimator in polynomial time. As previously, we bound the variance of this estimator. We have that:
\begin{equation}
    \begin{aligned}
       \sigma^2[X] = \mathbb{E}[|X|^2] - \mathbb{E}[|X|]^2 \leq \mathbb{E}[|X|^2],
    \end{aligned}
\end{equation}
and 
\begin{align}
    \mathbb{E}[|X|^2] &= \sum_j |\psi(j)|^2 \sum_{k,\ell} O_{jk} O^*_{j \ell} \frac{\psi(k)}{\psi(j)}\frac{\psi^*(\ell)}{\psi^*(j)} \\
    &\leq \sum_{j,k,\ell} O_{jk} O^*_{j \ell} \psi(k) \psi^*(\ell) \\
    &= \sum_{k,\ell} {O^2}_{\ell k} \psi^*(\ell) \psi(k) = \braket{\psi|O^2|\psi}.
\end{align}

\footnote{
The cancellation of $|\psi(j)|^2$ is problematic for all values for which $\psi(j) =0$, because the random variable becomes unbounded on the values outside of the support of $|\psi(j)|^2$. The expectation value then depends on the values that the observable $O_{jk}$ takes on samples outside of the support of $|\psi(j)|^2$, which is undesirable. Let $\Omega$ be the domain of $|\psi(j)|^2$ and $\Sigma := \lbrace j \in \Omega : |\psi(j)|^2 > 0 \rbrace $ be its support. To alleviate the above issue, it is more natural to define: \begin{align}
    \mathbb{E}[|X|^2]' &:= \sum_{j \in \Sigma} |\psi(j)|^2 \left| \sum_{k \in \Omega}  \frac{\psi_k}{\psi(j)} O_{jk}\right|^2,
\end{align}
such that the cancellation of $|\psi(j)|^2$ is well-defined. We can write this as:
\begin{align}
       \mathbb{E}[|X|^2]' &= \sum_{j \in \Sigma} \bigg( \underbrace{ \sum_{k \in \Omega} \psi_k O_{jk} }_{\phi_j}\bigg)\bigg( \underbrace{\sum_{\ell \in \Omega} \psi_\ell O_{j\ell}}_{\phi_j^*} \bigg)^*  = \sum_{j \in \Sigma} |\phi_j|^2 \\ &\leq \sum_{j \in \Omega} |\phi_j|^2  = \braket{\psi|O^2|\psi}.
\end{align}
Irrespectivelly of the definition of $\mathbb{E}{|X|^2}$, the inequality following Eq. 36 holds. I want to thank Giuseppe Carleo for pointing out this subtlety.
}

We have that $\braket{\psi|O^2|\psi} \leq \sum_{\lambda} \lambda^2 |\braket{\lambda|\psi}|^2 \leq \lambda^2_{\textnormal{max}}$, where $\lambda_{\textnormal{max}}$ is the largest eigenvalue of $O$. For hermitian matrices, the largest eigenvalue coincides with the operator norm $\|O \| = \lambda_\textnormal{max}$, from which we have that $\braket{\psi|O^2|\psi} \leq \| O \|^2$. It follows that: $\sigma[X] \leq \|O\|^2$.

\begin{theorem}[Estimating Sparse Observables with AR]
There is an algorithm that estimates $\braket{\psi|O|\psi}$ for $\|O\| \leq 1$ to $\epsilon$-additive error with probability at least $1-e^{-n}$ using at most $32sn/\epsilon^2$, where $s$ is the row-sparsity of $O$.
\end{theorem}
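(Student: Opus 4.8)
The plan is to run the median-of-means machinery of Theorem~\ref{Thm:MoM} on the local-observable estimator $X = \sum_k O_{jk}\,[f_\psi(k)/f_\psi(j)]$, exactly as was done for the fidelity estimator in Corollary~\ref{Cor:robust_fidelity}, using the variance bound $\sigma^2[X] \le \mathbb{E}[|X|^2] = \braket{\psi|O^2|\psi} \le \|O\|^2 \le 1$ already established above. The one point needing care is that $X$ is complex-valued, whereas the median is only meaningful for a real random variable; since $O$ is Hermitian, $\braket{\psi|O|\psi}$ is real, so I would instead work with $Y := \mathrm{Re}(X)$. It satisfies $\mathbb{E}[Y] = \mathrm{Re}\braket{\psi|O|\psi} = \braket{\psi|O|\psi}$ (the expectation computation in the text gives $\mathbb{E}[X] = \braket{\psi|O|\psi}$) and $\sigma^2[Y] \le \mathbb{E}[|X|^2] \le \|O\|^2 \le 1$ by the same chain of inequalities.

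Next I would account for the query cost of one evaluation of $Y$. Drawing $j$ from the Born distribution is one $S = \Omega$ query, and evaluating $X$ for that $j$ requires the ratios $f_\psi(k)/f_\psi(j)$ for each of the at most $s$ indices $k$ with $O_{jk} \neq 0$; the term $k = j$ contributes the ratio $1$ for free, so at most $s$ AR ratio queries suffice (the single sampling query can be absorbed into this count, or noted as lower-order). Hence one sample of $Y$ costs at most $s$ AR queries.

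Finally I would plug $k = 4/\epsilon^2$ and $\ell = 8n$ into Theorem~\ref{Thm:MoM}. With $\sigma^2 \le 1$ we have $\tfrac12 - \tfrac{\sigma^2}{k\epsilon^2} = \tfrac12 - \tfrac{\sigma^2}{4} \ge \tfrac14 > 0$, so the failure probability is at most $\exp\!\big(-16n\,(\tfrac12 - \tfrac{\sigma^2}{4})^2\big) \le \exp(-16n/16) = e^{-n}$, using $(\tfrac12 - \tfrac{\sigma^2}{4})^2 \ge \tfrac1{16}$ for $\sigma^2 \in [0,1]$. The total number of AR queries is $k\ell$ evaluations times at most $s$ queries each, i.e. $8n \cdot \tfrac{4}{\epsilon^2} \cdot s = 32sn/\epsilon^2$, as claimed. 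I do not expect any genuine obstacle: the argument is a near-verbatim transcription of Corollary~\ref{Cor:robust_fidelity} with the overlap variable $G$ replaced by $\mathrm{Re}(X)$, and the only things to get right are the reduction to a real random variable and the bookkeeping that each evaluation of $X$ touches precisely the $\le s$ nonzero entries of row $j$ of $O$. (As elsewhere in the paper, the finite precision of the AR ratios is neglected since it affects the statistics only after exponentially many queries.)
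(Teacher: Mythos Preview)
Your proposal is correct and follows essentially the same approach as the paper: apply the median-of-means estimator (Theorem~\ref{Thm:MoM}) to the local-observable random variable $X$, use the variance bound $\sigma^2[X]\le \|O\|^2\le 1$, and choose $k=4/\epsilon^2$, $\ell=8n$ exactly as in Corollary~\ref{Cor:robust_fidelity}, with the only change being that each evaluation costs $s$ rather than $2$ AR queries. Your extra care in passing to $Y=\mathrm{Re}(X)$ so that the median is well-defined is a detail the paper silently glosses over; otherwise the arguments coincide.
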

\begin{proof}
Use the median of means estimator in Theorem~\ref{Thm:MoM}. One evaluation of the random variable $X$ costs at most $s$ queries, where $s$ is the row sparsity of $O$. The rest follows from Corollary~\ref{Cor:robust_fidelity}.
\end{proof}

\subsection{AR and dequantization?} SQ access was studied in dequantization and it is natural to ask if AR can lead to some improvements in that framework.
We outline mostly negative results for improvements of Ref.~\cite{Ewin18} using AR.

Ref.~\cite{Ewin18}, Proposition 4.2. gives an algorithm for estimating the inner product $\braket{x, y}$ of two real vectors $x$ and $y$, using the knowledge of the normalization constant of one of the vectors and error that depends on the normalization factor of both. Assuming $\braket{x,y} \geq 0$ (i.e. for non-negative vectors), the AR fidelity algorithm gives a good estimator of the inner product without knowledge of the normalizing factors. 

Proposition 4.3. of Ref.~\cite{Ewin18} crucially depends on computing the rejection sampling filter: 
\begin{align}
    r_i &= \frac{(Vw)^2(i)}{k \sum_j^k w_j^2 V_{ij}^2} = \left[k \sum_\ell^k \left( \sum_j^k \left[ \frac{V_{ij}}{V_{i \ell}} \frac{\omega_j}{\omega_\ell}\right] \right)^{-2} \right]^{-1}.
\end{align}
This can be computed with $O(k^2)$ AR ratio queries to $V$, {assuming that the entire matrix has been normalized with the same (possibly unknown) normalization factor}. This assumption is most likely too strong to be useful in the context of Tang's algorithm.

Lastly, the modified version of the FKV algorithm \cite{FKV04} (Algorithm~2 in \cite{Ewin18}) crucially relies on sampling from the distribution induced by norms of the matrix columns (normalization factors). There does not seem to be a simple way to circumvent this with the AR access. It would be really interesting to see if some of these limitations could be avoided in different dequantization algorithms.

\section{Postselection Gadgets}
\label{Sec:Gadgets}

Here we explore a different way of accessing the wavefunction that can be implemented with an NQS: postselection gadgets. Postselection gadgets are maps between NQSs that allow for a different set of conditional queries to the Born distribution encoded in the NQS called \emph{subcube conditional queries} in Ref.~\cite{Canonne19}. In contrast to PCOND, postselection gadgets cannot be instantiated efficiently for an arbitrary NQS. We use this property to show that there is an NQS with just three nodes that does not encode valid wavefunction and cannot be sampled from -- a counterpart to similar results for RBMs \cite{Long10, Martens13}. It is possible that the gadgets may have applications beyond this, but the analysis seems to be beyond reach of the techniques known to the author.

\subsection{Postselection Gadgets}

Let $|\psi_\theta(v | r)|^2 := p_\theta(v|r)$ where $v \in \lbrace -1, 1 \rbrace^n$ and $r \in \lbrace -1, 1, \star \rbrace^n$ be the distribution $|\psi_\theta (v)|^2 := p_{\theta}(v)$, conditioned on the event that for all non-$\star$ bits of $r$, the output bits of $v$ are fixed to bits of $r$. 
A \emph{postselection gadget} is a function that transforms a NQS $\theta$ to another NQS $\tilde{\theta}$ (possibly with additional nodes), such that $|\psi_\theta(v | r)|^2 = |\psi_{\tilde{\theta}}(v)|^2$.
\begin{theorem}
\label{Thm:conditionals}
There is a polynomial-time algorithm that, given $\theta = (a,b,W)$ and $r \in \lbrace 1, -1, \, \star \rbrace^n$, outputs $\tilde{\theta}$, such that $p_{\tilde{\theta}}(v) =  p_{\theta}(v|\, [r_i \neq \star] \implies [r_i = v_i]) = p_\theta(v|r)$.
\end{theorem}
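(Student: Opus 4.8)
The plan is to directly engineer the new parameters $\tilde\theta = (\tilde a, \tilde b, \tilde W)$ by ``absorbing'' the fixed visible bits of $r$ into the biases of the remaining nodes and then deleting the fixed visible nodes from the network. Write $F \subseteq [n]$ for the set of indices $i$ with $r_i \neq \star$ (the \emph{frozen} coordinates) and $U = [n]\setminus F$ for the free ones. On the support of the conditional distribution every configuration $v$ has $v_i = r_i$ for $i \in F$, so we substitute $v_i \mapsto r_i$ everywhere in $f_\theta(v)$. Using the product form
\begin{equation}
f_\theta(v) = \exp\!\Big(\sum_{i} a_i v_i\Big)\prod_{k=1}^m 2\cosh\!\Big(b_k + \sum_j v_j W_{jk}\Big),
\end{equation}
the substitution splits each linear form as $\sum_{j\in U} v_j W_{jk} + \sum_{j\in F} r_j W_{jk}$, and likewise $\sum_i a_i v_i = \sum_{i\in U} a_i v_i + \sum_{i\in F} a_i r_i$. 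Hence I would set $\tilde a = (a_i)_{i\in U}$, $\tilde b_k = b_k + \sum_{j\in F} r_j W_{jk}$, and $\tilde W$ the submatrix $(W_{jk})_{j\in U, k\in[m]}$, which is computable in $O(nm)$ time. Then $f_{\tilde\theta}(v') = \exp(-\sum_{i\in F} a_i r_i)\, f_\theta(v)$ for the configuration $v$ obtained by inserting the frozen bits into $v' \in \{-1,1\}^{|U|}$, i.e.\ $f_{\tilde\theta}$ equals $f_\theta$ restricted to the subcube up to a single global constant $c := \exp(\sum_{i\in F} a_i r_i)$ that does not depend on $v'$.

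The second step is to pass from $f$ to the normalized Born distributions. Since $p_{\tilde\theta}(v') = |f_{\tilde\theta}(v')|^2 / \sum_{u'} |f_{\tilde\theta}(u')|^2$, the constant $|c|^2$ cancels between numerator and denominator, giving
\begin{equation}
p_{\tilde\theta}(v') = \frac{|f_\theta(v)|^2}{\sum_{u : u_i = r_i \ \forall i\in F} |f_\theta(u)|^2} = p_\theta\big(v \,\big|\, [r_i\neq\star]\Rightarrow[r_i=v_i]\big),
\end{equation}
which is exactly the claimed conditional distribution, now indexed by the free coordinates; re-embedding into $\{-1,1\}^n$ by padding with the frozen bits (or formally keeping the frozen nodes but with weights forced to reproduce $r$) recovers the statement as written. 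One subtlety to address is Assumption~\ref{Ass:Key}: I should check the conditional is well-defined, i.e.\ the denominator is nonzero — this holds precisely when the subcube contains at least one $u$ with $f_\theta(u)\neq 0$, equivalently $p_\theta(r_i=v_i\ \forall i\in F) > 0$; if it is zero the conditioning event has probability zero and the statement is vacuous, so I would note this as a side condition.

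I do not expect a serious obstacle here — the result is essentially a bookkeeping computation exploiting the fact that freezing visible nodes in a bipartite energy-based model is a linear shift of the hidden biases. The one place requiring a little care is the treatment of frozen coordinates: the theorem statement keeps $\tilde\theta$ over the full index set $n$, whereas the natural construction produces a network on $|U| < n$ visible nodes. I would reconcile this either by explicitly allowing the dimension of the NQS to change under the gadget (the cleanest option, and consistent with the ``changing network structure'' language of Sec.~\ref{Sec:Gadgets}), or by retaining the frozen visible nodes $i\in F$ with their interaction weights zeroed out and a visible bias $a_i$ large enough that the Born distribution puts overwhelming (and in the idealized/exact-arithmetic treatment, full) weight on $v_i = r_i$ — the same ``machine-precision'' convention used elsewhere in the paper. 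I would mention both and adopt the first for the formal statement, noting the second is what an actual NQS implementation would use.
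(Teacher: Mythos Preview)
Your construction is correct (on the reduced visible space) but takes a genuinely different route from the paper. You freeze coordinates by \emph{substitution}: plug $v_i \mapsto r_i$ into $f_\theta$, absorb the resulting constants into the hidden biases, and delete the frozen visible nodes, obtaining an NQS on $|U|$ visible units whose Born distribution equals the conditional restricted to the free coordinates. The paper instead freezes by \emph{augmentation}: for each non-$\star$ index it adds a new hidden node $g$ attached only to $v_i$, with bias $i\pi/4$ and coupling $-i(\pi/4)r_i$, so that $f_{\tilde\theta}(v)$ picks up a factor $2\cosh\!\big(i\tfrac{\pi}{4}(r_i-v_i)\big)$ which equals $2$ when $v_i=r_i$ and \emph{exactly} $0$ when $v_i=-r_i$ (since $\cosh(i\pi/2)=0$). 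The modified network keeps all $n$ visible nodes, with $p_{\tilde\theta}$ supported exactly on the subcube, so $p_{\tilde\theta}(v)=p_\theta(v\mid r)$ holds literally as distributions on $\{-1,1\}^n$ without the dimension bookkeeping you flag.

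Two remarks you should fold in. First, your option~2 for reinstating the frozen coordinates---zeroing their couplings and imposing a large real visible bias---does \emph{not} give exact equality: a real bias contributes $\exp(a_i v_i)$ to $f$, so $|f|^2$ never vanishes and the disallowed value $v_i=-r_i$ retains positive probability for any finite $a_i$; exact zeroing genuinely requires complex weights, which is precisely the paper's gadget. Second, the paper's downstream use of this theorem (Sec.~\ref{Sec:Hardness}) attaches \emph{two} such gadgets to the same visible node, one forcing $v_1=+1$ and one forcing $v_1=-1$, to exhibit a three-node NQS with $f_{\tilde\theta}\equiv 0$. Your substitution approach cannot reproduce this, since one cannot substitute contradictory values into the same variable. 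So while your argument proves the theorem as stated (modulo the dimension change you already note), the gadget construction is what the rest of Sec.~\ref{Sec:Gadgets} actually relies on.
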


\begin{proof} The following proof uses notation used in Sec.~\ref{Sec:NQS}.
For every non-$\star$ bit of $r$, introduce a hidden node $g$ and attach it to the corresponding visible node. 
Set the bias on this hidden node to $i(\pi/ 4)$ and couple it to the visible node with strength $-i(\pi/4) r_i$ (see Fig.~\ref{fig:proof-diagram}).  This gives:
\begin{equation}
\begin{aligned}
    f_{\tilde{\theta}}(v) 
    &= \left( \prod_{j: {r_j \neq \star}} 2 \cosh \left[i \frac{\pi}{4} (r_j-v_j) \right] \right) \times f_\theta(v).
\end{aligned}
\label{Eq:numerator_postselected}
\end{equation}
Let $F$ be the event\footnote{An event $F \subseteq \Omega$ is a subset of domain $\Omega$ of the probability distribution. } that $v_1 = r_1, v_2 = r_2 \ldots, v_k = r_k$.
Because $2\cosh \left(i \pi/2\right) =0$, we have that:
\begin{align}
    f_{\tilde{\theta}}(v \in F) &= 2^k f_\theta(v),  & f_{\tilde{\theta}}(v \not \in F)= 0. \label{Eq:conditions}
\end{align}
Let $E \subseteq \lbrace -1, +1 \rbrace^n$ be an arbitrary event. We have that:
\begin{equation}
\begin{aligned}
    p_{\tilde{\theta}}(E) &= p_{\tilde{\theta}}(E \cap F) + p_{\tilde{\theta}}(E \cap \neg F) \\ &= p_{\tilde{\theta}}(E | F) p_{\tilde{\theta}}(F) + p_{\tilde{\theta}}(E | \neg F)p_{\tilde{\theta}}(\neg F).
\end{aligned}
\label{Eq:tilde_conditional}
\end{equation}
From Eq.~\ref{Eq:conditions}, it follows that:
\begin{align}
    p_{\tilde{\theta}}(\neg F) &= \frac{\sum_{v\in \neg F} |f_{\tilde{\theta}}(v)|^2}{\sum_{v\in \lbrace -1, +1 \rbrace^N} |f_{\tilde{\theta}}(v)|^2} = 0.
\end{align}
and 
\begin{equation}
\begin{aligned}
    p_{\tilde{\theta}}(E|F) &= \frac{\sum_{v\in  E \cap F} |f_{\tilde{\theta}}(v)|^2}{\sum_{v\in F} |f_{\tilde{\theta}}(v)|^2}  = \frac{\sum_{v\in  E \cap F} 4^k |f_{\theta}(v)|^2}{\sum_{v\in F} 4^k |f_{\theta}(v)|^2} = p_\theta(E|F).
\end{aligned}
\end{equation}
Hence, from Eq.~\ref{Eq:tilde_conditional}: 
$p_{\tilde{\theta}}(E) = p_\theta(E|F)$. Thus, the output distribution of the augmented state is \textit{exactly} the conditional of the original distribution. See Fig.~\ref{fig:proof-diagram}.
\end{proof}

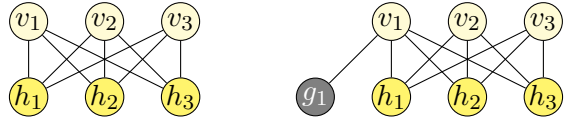
\begin{figure}[h]
\centering
\begin{tikzpicture}[baseline = (current bounding box.center)]
\node[diagram-node-light] at (0, 0)   (v1) {$v_1$};
\node[diagram-node-light] at (1, 0)   (v2) {$v_2$};
\node[diagram-node-light] at (2, 0)   (v3) {$v_3$};

\node[diagram-node-dark] at (0, -1)   (h1) {$h_1$};
\node[diagram-node-dark] at (1, -1)   (h2) {$h_2$};
\node[diagram-node-dark] at (2, -1)   (h3) {$h_3$};

\draw (h1) -- (v1);
\draw (h1) -- (v2);
\draw (h1) -- (v3);

\draw (h2) -- (v1);
\draw (h2) -- (v2);
\draw (h2) -- (v3);

\draw (h3) -- (v1);
\draw (h3) -- (v2);
\draw (h3) -- (v3);
\end{tikzpicture}~\hspace{1cm}~\begin{tikzpicture}[baseline = (current bounding box.center)]
\node[diagram-node-light] at (0, 0)   (v1) {$v_1$};
\node[diagram-node-light] at (1, 0)   (v2) {$v_2$};
\node[diagram-node-light] at (2, 0)   (v3) {$v_3$};

\node[diagram-node-dark] at (0, -1)   (h1) {$h_1$};
\node[diagram-node-dark] at (1, -1)   (h2) {$h_2$};
\node[diagram-node-dark] at (2, -1)   (h3) {$h_3$};
\node[diagram-node-inverted] at (-1, -1)   (g1) {$g_1$};

\draw (h1) -- (v1);
\draw (h1) -- (v2);
\draw (h1) -- (v3);

\draw (h2) -- (v1);
\draw (h2) -- (v2);
\draw (h2) -- (v3);

\draw (h3) -- (v1);
\draw (h3) -- (v2);
\draw (h3) -- (v3);

\draw (g1) -> (v1);
\end{tikzpicture}
\caption{To sample an RBM distribution over $\lbrace -1, 1\rbrace^3$ conditioned on $r = 1\star \star$, we can augment the RBM by an extra hidden node that only couples to the visible node $v_1$ with a bias $i \pi/4$ and coupling strength $-i \pi/4$.}
\label{fig:proof-diagram}
\end{figure}

Note that the function $\theta \mapsto \theta'$ is easy to compute. Postselection gadgets allow for sampling from a subset of conditional distributions of the distribution encoded into the NQS. 
Additional examples of postselection gadgets can be found in the appendix.  The construction almost trivially extends to Deep Boltzmann Machines \cite{Gao17, Carleo18}. While the analysis was inspired by its RBM counterpart by Long and Servedio in Ref.~\cite{Long10}, their gadget construction does not straightforwardly extend to RBMs.

\subsection{Not all NQS encode valid quantum states}
\label{Sec:Hardness}

 We show that many NQS do not encode valid quantum states. This follows from the fact that the postselection gadget allows postselection on probability zero events. Any NQS can be modified by adding two hidden nodes $g_1$ and $g_2$, as in Sec.~\ref{Sec:Gadgets}, that fix the value of the visible node $v_1$ to $1$ and $-1$ respectively. Eq.~\ref{Eq:numerator_postselected} then gives $f_{\tilde{\theta}}(v) = 0$, which also means that the encoded ``wavefunction'' is zero. The resulting NQS then does not encode any wavefunction because it cannot be normalized. The smallest NQS for which this works is one with two hidden nodes and one visible node. Such NQS naturally cannot be sampled from by any algorithm.
\begin{figure}[h]
\centering
\begin{tikzpicture}[baseline = (current bounding box.center)]
\node[diagram-node-light] at (0, 0)   (v1) {$v_1$};
\node[diagram-node-light] at (1, 0)   (v2) {$v_2$};
\node[diagram-node-light] at (2, 0)   (v3) {$v_3$};

\node[diagram-node-dark] at (0, -1)   (h1) {$h_1$};
\node[diagram-node-dark] at (1, -1)   (h2) {$h_2$};
\node[diagram-node-dark] at (2, -1)   (h3) {$h_3$};
\node[diagram-node-inverted] at (-1, -1)   (g1) {$g_1$};
\node[diagram-node-inverted] at (-2, -1)   (g2) {$g_2$};

\draw (h1) -- (v1);
\draw (h1) -- (v2);
\draw (h1) -- (v3);

\draw (h2) -- (v1);
\draw (h2) -- (v2);
\draw (h2) -- (v3);

\draw (h3) -- (v1);
\draw (h3) -- (v2);
\draw (h3) -- (v3);

\draw (g1) -> (v1);
\draw (g2) -> (v1);
\end{tikzpicture}
\caption{A NQS that does not encode a valid quantum state can be constructed by appending two auxiliary nodes, each of which fixes the value of $v_1$ to $+1$ and $-1$ respectively.}
\label{fig:zero_NQS}
\end{figure}
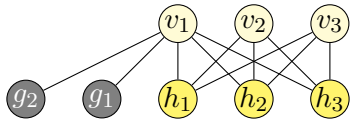

There does not seem to be any analogous \emph{simple} construction  for restricted Boltzmann machines with real valued coefficients. The reason is, as shown earlier, that RBMs cannot encode zeroes in the output probability because each outcome probability is lower-bounded by $2^m \exp(-\| a \|_\infty n)$ and the contribution of any ``gadget'' hidden node to the output RBM probability is a factor of $\cosh(x)$ for some real $x$, which is lower-bounded by $1$. 
We remark that existence of such ``zero-valued'' NQS is however not significant for applications in which the network is trained by sampling from output distributions of a sequence of NQSs. Some care must be however taken when encoding quantum states directly.

\subsection{Other applications?}
The postselection gadget can be in principle used for postselection without retries. However, even when the original NQS $\theta$ can be easily sampled from, it may be (and it is very likely in some cases) that the Gibbs sampling algorithm for the modified NQS $\tilde{\theta}$ won't be efficient anymore. Still, one toy example that may suggest that this could be more efficient than resampling (but that also seems useless) is the case in which the encoded distribution is a product distribution - fixing output bits of such distribution using gadgets will not lead to any slowdown.

In the case where the sampling algorithm remains efficient for the appropriate sequence of conditionals, one can also obtain a \emph{very crude} multiplicative approximation of the normalizing factor of the NQS wavefunction, essentially by retracing the RBM algorithm presented in Ref.~\cite{Long10}. The question of characterizing what conditional gadgets allow for efficient sampling however remains widely open.

Lastly, Ref.~\cite{Jonsson18} used a similar gadget construction to simulate universal random circuits with NQSs. The postselection gadgets from Thm.~\ref{Thm:conditionals} can be seen as extension of their result. See Appendix~\ref{Sec:Additional} for additional postselection gadgets and Appendix~\ref{Sec:PauliGadgets} for enconding of Pauli gates.

\section{Discussion}

We studied the access model offered by neural network quantum states (NQS), which, along with connection to previous results from conditional distribution testing, motivated the definition of amplitude ratio (AR) access model. We related AR to sample and query (SQ) access and showed that it retains some of the simulation capabilities of SQ. We gave some evidence that AR may be weaker than SQ and showed that existing results in distribution testing imply that AR is stronger than sample access. We then considered alternative access to the NQS wavefunctions by means of \emph{subcube conditional} queries and showed that even small NQS may not encode valid distributions. Our work left several questions open:

\begin{itemize} 
\item It would interesting to further explore the connections between AR and dequantization and understand if it is possible to meaningfully relax the SQ normalization requirement. 
\item Both definitions of CT and AR states assumed existence of a classical randomized sampler for the Born distribution. One may thus ask if there are any nontrivial states in the \emph{quantum-classical generalization} of CT and AR states, where we can efficiently sample from using quantum algorithm, but still classically estimate the ratios or amplitudes. Such states may be useful for construction of quantum algorithms based on conditional property testing results in Ref.~\cite{Canonne12, Chakraborty13}. It might be for example interesting to understand how this plays with some of the known supremacy results in which approximating a target amplitude is known to be $\# \P$-hard, yet there is an efficient quantum algorithm that samples the output \cite{Aaronson10}. It may be that, while the amplitudes are hard to compute, approximating their ratios remains tractable. 
\item It would be also interesting to understand, perhaps numerically, if and for what problems could the NQS postselection gadgets provide advantage when compared to postselection by resampling in some of the applications of NQS.
\end{itemize}

\section{Acknowledgements}
 I want to sincerely thank Ashley Montanaro and Noah Linden for their help. I also want to thank Giuseppe Carleo, Srini Arunachalam, Sergii Strelchuk, James Stokes and Juani Bermejo-Vega for their suggestions and discussion.

\printbibliography[heading=bibintoc]

\appendix

\section{Additional Postselection Gadgets}
\label{Sec:Additional}

\subsection{Hamming weight}

\begin{figure}[h]
\centering
\begin{tikzpicture}
\node[diagram-node-light] at (0, 0)   (v1) {$v_1$};
\node[diagram-node-light] at (1, 0)   (v2) {$v_2$};
\node[diagram-node-light] at (2, 0)   (v3) {$v_3$};

\node[diagram-node-dark] at (0, -1)   (h1) {$h_1$};
\node[diagram-node-dark] at (1, -1)   (h2) {$h_2$};
\node[diagram-node-dark] at (2, -1)   (h3) {$h_3$};
\node[diagram-node-inverted] at (-1, -1)   (g1) {$g_1$};

\draw (h1) -- (v1);
\draw (h1) -- (v2);
\draw (h1) -- (v3);

\draw (h2) -- (v1);
\draw (h2) -- (v2);
\draw (h2) -- (v3);

\draw (h3) -- (v1);
\draw (h3) -- (v2);
\draw (h3) -- (v3);

\draw (g1) -> (v1);
\draw (g1) -> (v2);
\draw (g1) -> (v3);
\end{tikzpicture}
\caption{To sample an RBM distribution conditioned on $\sum_i v_i = k$ for some $k 
\in [n]$, we use a two step process. We augment the RBM by an extra node coupled to all visible nodes with weight $W_k = \frac{i \pi}{k}$ and bias $b = -i \frac{\pi}{4}$. We do this all $n-1$ times, excluding only the sector we are interested in.}
\label{fig:proof-diagram2}
\end{figure}
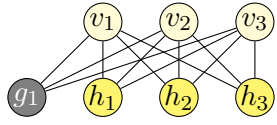 

Start with $n$ visible nodes. Add a single hidden node $g_k$ coupled to all of the visible nodes with weight $W = i\pi/n$ and $b = i \pi/2 - i k \pi/n$. This encodes: 
\begin{align}
    f_{k}(v) &= 2 \cosh \left[ \sum_i^n W v_i + b \right] = 2 \sin \left[ \frac{\pi}{n} \left( \sum_j^n v_j - k \right) \right].
\end{align}
Notice that $f_k(v) = 0$ if $\sum_j v_j = k$. By adding $(n-1)$ such nodes, we have that $F_k(v) :=  \prod_{j=1}^n f_j (v) / f_k(v)$. This function satisfies:
\begin{align}
    F_k(v) := \begin{cases} &0 \text{ if } \sum_i v_i \neq k, \\ 
    &2^{n-1} \sin \left[ \frac{\pi}{n} (|v|-1) \right]\sin \left[ \frac{\pi}{n} (|v|-2) \right] \ldots 
    \end{cases}
\end{align}
By modifying the visible biases, this can be normalized such that:
\begin{align}
F_k'(v) :=  \begin{cases} &0 \text{ if } \sum_i v_i \neq k. \\ 
    &c \text{ otherwise}.
    \end{cases}
\end{align}
for any constant $c$.

\subsection{Parity} Martens et. al showed in Ref.~\cite{Martens13} how to approximatelly  
represent parity (or any symmetric function) with an RBM of $m = n^2 +1$ hidden nodes. Here's an explicit construction that represents parity with only two hidden nodes of NQS: 

\begin{figure}[h]
\centering
\begin{tikzpicture}
\node[diagram-node-light] at (0, 0)   (v1) {$v_1$};
\node[diagram-node-light] at (1, 0)   (v2) {$v_2$};
\node[diagram-node-light] at (2, 0)   (v3) {$v_3$};

\node[diagram-node-dark] at (0, -1)   (h1) {$h_1$};
\node[diagram-node-dark] at (1, -1)   (h2) {$h_2$};
\node[diagram-node-dark] at (2, -1)   (h3) {$h_3$};

\node[diagram-node-inverted] at (-1, -1)   (g1) {$g_1$};
\node[diagram-node-inverted] at (-2, -1)   (g2) {$g_2$};

\draw (h1) -- (v1);
\draw (h1) -- (v2);
\draw (h1) -- (v3);

\draw (h2) -- (v1);
\draw (h2) -- (v2);
\draw (h2) -- (v3);

\draw (h3) -- (v1);
\draw (h3) -- (v2);
\draw (h3) -- (v3);

\draw (g1) -> (v1);
\draw (g1) -> (v2);
\draw (g1) -> (v3);

\draw (g2) -> (v1);
\draw (g2) -> (v2);
\draw (g2) -> (v3);
\end{tikzpicture}
\caption{Fixing parity with two hidden nodes.}
\label{fig:parity_with_two}
\end{figure}
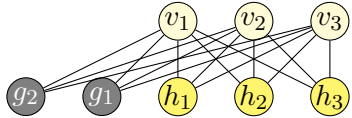 

Choose $W = i \pi/2$ for all edges from hidden to visible nodes and $\text{Im}(b) = 0$ for all hidden biases. This gives: 
\begin{align}
    f_\theta(v) := 4 \sin^2 \left( \frac{\pi}{2} \sum_i^n v_i \right) = \begin{cases}
    0 &\text{ if } \sum_i^n v_i \; \text{is even.} \\
    4 &\text{ if } \sum_i^n v_i \; \text{is odd}.
    \end{cases}
\end{align}
This represents a distribution that is uniform over all strings with odd parity and zero on the even parity ones and encodes parity distribution by only two hidden nodes. This gadget can be used to fix parity of NQS samples. Similar gadget has been used in Ref.~\cite{Carleo18}. 

\section{Pauli Gadgets}
\label{Sec:PauliGadgets}
We can use the same construction to postselection gadgets to simulate action of Pauli unitaries on the NQS wavefunction. This has application to simulation of quantum circuits in Ref.~\cite{Jonsson18}. The Pauli Z gadget appeared in \cite{Jonsson18}, but the others seem useful as a reference, so I will leave it here. Along with the $CZ$ gate and Hadamard approximation of \cite{Jonsson18}, the gadgets below give enconding of $X, Y, Z, H, CZ$ universal gateset, useful for simulating quantum circuits with NQS.

\paragraph{Pauli Z:} The action of a Pauli $Z_1$ operator applied to the first qubit on the NQS wavefunction can be understood as the following function on the amplitudes (recall that we chose $v \in \lbrace -1, +1\rbrace^n$):
\begin{align}
    f_\theta(v) \mapsto i^{(1+v_1)} f_\theta(v).
\end{align}
As an example, consider just two qubits, where we expect $Z_1$ to act as follows:
\begin{equation}
\begin{aligned}
    Z_1[f_\theta(++)] &= f_\theta(++), & Z_1[f_\theta(-+)] &= -f_\theta(-+), \\
    Z_1[f_\theta(-+)] &= f_\theta(-+), & Z_1[f_\theta(++)] &= -f_\theta(++).
\end{aligned}
\end{equation}
Let $Z_s = \bigotimes_{k = 1}^n Z^{s_k}$ for $s \in \lbrace 0,1 \rbrace^n$.
A gadget that applies this operator is defined as follows:
\begin{align}
    f_{Z_s}(v) &= \prod_{g \in [n]} \left[2\cosh\left(i \frac{\pi}{2}(1-v_g)\right)\right]^{s_k} \times f(v).
\end{align}
It can be constructed similarly to the postselection gadget by adding a hidden node to all nodes that should have Pauli gate applied to them, fixing their bias to $b = \left(i\frac{\pi}{2}\right)$ and interaction strength to $-i \frac{\pi}{2}$.

\paragraph{Pauli X:} The action of a Pauli $X_1$ operator applied to the first qubit on the NQS wavefunction can be understood as a permutation on the amplitudes. As an example, consider the action of $X_1$:

\begin{equation}
\begin{aligned}
    X_1[f_\theta(++)] &= f_\theta(-+), & X_1[f_\theta(-+)] &= f_\theta(++), \\
    X_1[f_\theta(-+)] &= f_\theta(++), & X_1[f_\theta(++)] &= f_\theta(-+).
\end{aligned}
\end{equation}

This transformation can be implemented by a map on the parameters, $\theta \mapsto \theta$, such that $a_1 \mapsto -a_1$ and $W_{1i} \mapsto -W_{1i}$ for all $i$.

\paragraph{Pauli Y:}
The gadget that applies this operator to the first qubit is defined as the action on the parameters as $a_1 \mapsto -a_1$ and $W_{1i} \mapsto -W_{1i}\; \forall i$ and an additional phase factor, similarly to \textbf{Pauli Z} gadget.

\paragraph{Pauli tensors:}
Combination of the above Pauli gadgets can be used to implement arbitrary Pauli tensor strings by adding up to $n$ hidden nodes.

\end{document}